\newcommand{\braket}[2]{\mbox{$ \langle #1 | #2 \rangle $}}
\newcommand{\sandwich}[3]{\mbox{$ \langle #1 | #2 | #3 \rangle $}}
\newcommand{\ket}[1]{\mbox{$ | #1 \rangle $}}
\newcommand{\bra}[1]{\mbox{$ \langle #1 | $}}
\newcommand{\be}{\begin{equation}}
\newcommand{\ee}{\end{equation}}
\newcommand{\ba}{\begin{eqnarray}}
\newcommand{\ea}{\end{eqnarray}}
\newcommand{\one}{\leavevmode\hbox{\small1\normalsize\kern-.33em1}}
\newcommand{\moy}[1]{\langle #1 \rangle}
\newtheorem{thm}{Theorem}
\newtheorem{lem}[thm]{Lemma}
\begin{document}

\title{How well can one jointly measure two incompatible observables on a given quantum state?}

\author{Cyril Branciard}
\affiliation{Centre for Engineered Quantum Systems and School of Mathematics and Physics, The University of Queensland, St Lucia, QLD 4072, Australia}

\date{\today}

\begin{abstract}
Heisenberg's uncertainty principle is one of the main tenets of quantum theory. Nevertheless, and despite its fundamental importance for our understanding of quantum foundations, there has been some confusion in its interpretation: although Heisenberg's first argument was that the measurement of one observable on a quantum state necessarily disturbs another incompatible observable, standard uncertainty relations typically bound the indeterminacy of the outcomes when either one or the other observable is measured. In this paper, we quantify precisely Heisenberg's intuition. Even if two incompatible observables cannot be measured together, one can still approximate their joint measurement, at the price of introducing some errors with respect to the ideal measurement of each of them. We present a new, tight relation characterizing the optimal trade-off between the error on one observable versus the error on the other. As a particular case, our approach allows us to characterize the disturbance of an observable induced by the approximate measurement of another one; we also derive a stronger error-disturbance relation for this scenario.
\end{abstract}

\maketitle

The discovery and development of quantum theory have generated passionate debates amongst its founding fathers. The surprising features of the theory---e.g., its probabilistic nature, its uncertainty principle~\cite{heisenberg27} or its nonlocality~\cite{EPR,bell_book}---were indeed too counter-intuitive to satisfy all physicists: Einstein, for instance, famously argued that ``God does not play dice''~\cite{God_dice}, and could not accept the apparent ``spooky action at a distance''~\cite{spooky_action} that seemed to be allowed by the theory.
Interestingly, it has since then been realized that what first seemed to be limitations of the theory---the impossibility to perfectly predict measurement outcomes and to explain them with local hidden variables---can turn out to allow for useful applications for information processing, such as quantum cryptography for instance~\cite{QKD_review}. With the advent of quantum information science, it becomes all the more essential to clarify what can or cannot be done quantum mechanically.

The well-known uncertainty principle is typically expressed in terms of ``uncertainty relations''. To fix the notations, let us define the standard deviations $\Delta A, \Delta B$ of two observables $A$ and $B$ in the state $\ket{\psi}$ as
\ba
\Delta A &=& \sandwich{\psi}{\,(A - \moy{A})^2\,}{\psi}^{1/2}, \label{def_sigA} \\
\Delta B &=& \sandwich{\psi}{\,(B - \moy{B})^2\,}{\psi}^{1/2}, \label{def_sigB}
\ea
with $\moy{A} = \sandwich{\psi}{A}{\psi}$ and $\moy{B} = \sandwich{\psi}{B}{\psi}$, and the ``value'' of the commutator $[A,B] = AB - BA$ in the state $\ket{\psi}$, divided by $2i$, as
\ba
C_{\!AB} &=& \frac{1}{2i} \sandwich{\psi}{[A,B]}{\psi} . \label{def_CAB}
\ea
Robertson's well-known uncertainty relation~\cite{robertson_UR} then imposes that
\ba
\Delta A \ \Delta B &\geq& |C_{\!AB}|. \label{robertson_UR}
\ea
Such uncertainty relations are often wrongly interpreted---even, historically, by some of the most illustrious authors~\cite{heisenberg1949_book,vonNeumann_book,bohm_book,bohr}---as saying that one cannot jointly measure the observables $A$ and $B$ on the state $\ket{\psi}$ when $C_{\!AB} \neq 0$, or that the measurement of one observable necessarily disturbs the other. Although this last observation corresponds indeed to Heisenberg's intuition~\cite{heisenberg27}, this is actually not what standard uncertainty relations imply, let-alone quantify~\cite{ballentine}. Rather than referring to joint (or successive) measurements of two observables on one state, they indeed bound the statistical deviations of the measurement results of $A$ and $B$, when each measurement is performed many times on several independent, identically prepared quantum states.

In this paper we aim instead at precisely quantifying Heisenberg's original formulation of the uncertainty principle. Even if two observables $A$ and $B$ are incompatible and can indeed not be jointly measured on a state $\ket{\psi}$, it is still possible to approximate their joint measurement. How good can such an approximation be? What is the optimal trade-off between the error induced on the measurement of $A$ and the error on $B$? Between the error in the approximation of one observable and the disturbance implied on the other?
We answer these questions below by deriving new, tight error-trade-off and error-disturbance relations.

\section*{APPROXIMATE JOINT MEASUREMENTS}

Let us start by setting up our general framework for approximate joint measurements. Our presentation is inspired by those of Refs.~\cite{ozawa03,hall04,ozawa04}, and is restricted here to the basics; more details are given in the Supporting Information (Part~A).

In order to approximate the measurement of an observable $A$ on a quantum system in the state $\ket{\psi}$ (in some Hilbert space ${\cal H}$), a general strategy consists in measuring another, ``approximate'' observable ${\cal A}$, possibly on an extended Hilbert space---i.e., on the joint system composed of the state $\ket{\psi} \in {\cal H}$, and of an ancillary system in the state $\ket{\xi}$ of another Hilbert space ${\cal K}$. In this picture, the impossible joint measurement of two incompatible observables $A$ and $B$ on $\ket{\psi}$ can thus be approximated by the perfect joint measurement of two compatible (i.e., commuting) observables ${\cal A}$ and ${\cal B}$ on $\ket{\psi,\xi} = \ket{\psi} \otimes \ket{\xi} \in {\cal H} \otimes {\cal K}$. Note that in full generality, we do not assume {\emph{a priori}} (for now at least) that ${\cal A}$ and ${\cal B}$ must have the same spectrums as $A$ and $B$.

Following Ozawa~\cite{Ozawa_1991,Ozawa03_physical_content,ozawa03,ozawa04,Ozawa04_gen_measurements}, we characterize the quality of the approximations ${\cal A}$ and ${\cal B}$ of $A$ and $B$, respectively, by defining the {\it root-mean-square (rms) errors}
\ba
\epsilon_{\cal A} &=& \sandwich{\psi,\xi}{\,({\cal A} - A \otimes \one)^2\,}{\psi,\xi}^{1/2} , \label{def_epsA} \\
\epsilon_{\cal B} &=& \sandwich{\psi,\xi}{\,({\cal B} - B \otimes \one)^2\,}{\psi,\xi}^{1/2} . \label{def_epsB}
\ea
These rms errors, which generalize standard definitions in classical estimation theory~\cite{van2001detection}, quantify the statistical deviations between the approximations ${\cal A}$ and ${\cal B}$, and the ideal measurements of $A$ and $B$. We refer to Refs.~\cite{hall03_POM,Ozawa03_physical_content,Ozawa04_gen_measurements,lund_wiseman} for discussions on the motivations and appropriateness of such definitions. There has been a controversy~\cite{werner04_QIC,Koshino2005191} on the question whether these quantities were experimentally accessible; two different indirect methods have nevertheless been proposed~\cite{Ozawa04_gen_measurements,lund_wiseman}, and recently implemented~\cite{erhart12,rozema12_steinberg_expment}.

\section*{$\quad \quad \ \ \ $ ERROR-TRADE-OFF RELATIONS \newline FOR JOINT MEASUREMENTS}

The fact that quantum theory forbids perfect joint measurements of incompatible observables implies that the rms errors $(\epsilon_{\cal A},\epsilon_{\cal B})$ can in general not take arbitrary values. Some limitations on their possible values have been obtained previously~\cite{ArthursKelly65,ArthursGoodman88,Ishikawa_1991,Ozawa_1991,ozawa03,ozawa04}, which we review below. For historical reasons, such limitations are often referred to as ``uncertainty relations'' (for joint measurements).
We will keep this terminology when we refer to previously derived relations; however, since such relations are not strictly speaking about uncertainty but about errors in the approximation of joint measurements, we will prefer the terminology ``error-trade-off relations (for joint measurements)''.

\subsection*{The Heisenberg-Arthurs-Kelly relation}

In his seminal paper~\cite{heisenberg27}, Heisenberg argued that the measurement of the position $q$ of a particle necessary implies a disturbance $\eta_p$ on its momentum $p$, and that this disturbance is all the more important as the precision of the measurement of $q$ is large (or as the ``error'' $\epsilon_q$ is small)---so that $\epsilon_q \, \eta_p \sim h$, where $h$ is the Planck constant.

The formalization of Heisenberg's intuition rapidly lead to the derivation of general uncertainty relations in terms of standard deviations [as in~\eqref{robertson_UR}] rather than of error and disturbance. Nevertheless, it is commonly believed that a similar relation to Robertson's should also restrict the possible values of the errors $\epsilon_{\cal A}$ and $\epsilon_{\cal B}$ on $A$ and $B$ in an approximate joint measurement, in such a way that
\ba
\epsilon_{\cal A} \ \epsilon_{\cal B} & \geq & |C_{\!AB}| .
\label{heisenberg_relation}
\ea
Although it is debatable whether this is really how the claims in~\cite{heisenberg27} should be interpreted and generalized, this relation is commonly attributed to Heisenberg in the literature~\cite{Ishikawa_1991,Ozawa03_physical_content,ozawa03,ozawa04,Ozawa04_gen_measurements,lund_wiseman,erhart12,rozema12_steinberg_expment}. Because it was actually first explicitly derived by Arthurs and Kelly~\cite{ArthursKelly65} (for position and momentum measurements---it was generalized to arbitrary observables by Arthurs and Goodman~\cite{ArthursGoodman88}), we will call it below the Heisenberg-Arthurs-Kelly relation.

This relation was indeed proven to hold, under some restrictive assumptions on the approximate joint measurements~\cite{ArthursKelly65,ArthursGoodman88,Ishikawa_1991,Ozawa_1991,ozawa04}; namely, it holds when ${\cal A}$ and ${\cal B}$ are such that the mean errors $\sandwich{\psi,\xi}{{\cal A} - A\!\otimes\! \one}{\psi,\xi}$ and $\sandwich{\psi,\xi}{{\cal B} - B\!\otimes\! \one}{\psi,\xi}$ are independent of the state $\ket{\psi}$. As we are only interested here in one particular state $\ket{\psi}$, for which we may want to adapt our approximation strategy, such an assumption is quite unsatisfactory for our purposes: we indeed aim at characterizing the trade-off between $\epsilon_{\cal A}$ and $\epsilon_{\cal B}$ for {\it all} possible approximate measurements, in which case {\it the Heisenberg-Arthurs-Kelly relation~\eqref{heisenberg_relation} does not generally hold}~\cite{ballentine}.

\subsection*{Ozawa's ``uncertainty relation''}

Only recently did Ozawa show~\cite{ozawa04} how one could derive a universally valid ``uncertainty relation'' for joint measurements, by adding two additional terms to the left-hand-side of Eq.~\eqref{heisenberg_relation}. His relation writes
\ba
\epsilon_{\cal A} \ \epsilon_{\cal B} + \Delta B \ \epsilon_{\cal A} + \Delta A \ \epsilon_{\cal B} & \geq & |C_{\!AB}| . \label{ozawa_relation}
\ea
(We note also that a very similar but inequivalent relation was derived by Hall~\cite{hall04}, which involves the standard deviations $\Delta {\cal A}$ and $\Delta {\cal B}$ rather than $\Delta A$ and $\Delta B$; see Supporting Information, Part~D for a discussion.)

The three terms in Ozawa's relation come from three independent uses of Robertson's relation~\eqref{robertson_UR} to different pairs of observables. While this indeed leads to a valid relation and allows one to exclude a large set of impossible values $(\epsilon_{\cal A}, \epsilon_{\cal B})$, this is not optimal, as the three Robertson's relations (and therefore Ozawa's relation) in general cannot be saturated simultaneously.

\subsection*{A new, tight error-trade-off relation for joint measurements}

Using a general geometric inequality for vectors in a Euclidean space (Lemma~\ref{ref_lemma1} of the Methods below), one can improve upon the sub-optimality of Ozawa's proof, and derive the following error-trade-off relation for approximate joint measurements:
\ba
\Delta B^2 \ \epsilon_{\cal A}^2 + \Delta A^2 \ \epsilon_{\cal B}^2 + 2 \sqrt{\Delta A^2 \Delta B^2 - C_{\!AB}^2} \ \epsilon_{\cal A} \, \epsilon_{\cal B} \ \geq \ C_{\!AB}^2 \, , \hspace{-5mm} \nonumber \\ \label{ETR}
\ea
or, in its dimensionless version---when $\Delta A, \Delta B \neq 0$, with $\tilde\epsilon_{\cal A} = \frac{\epsilon_{\cal A}}{\Delta A}$, $\tilde\epsilon_{\cal B} = \frac{\epsilon_{\cal B}}{\Delta B}$ and $\tilde{C}_{\!AB} = \frac{C_{\!AB}}{\Delta A\,\Delta B}$:
\ba
\tilde\epsilon_{\cal A}^2 + \tilde\epsilon_{\cal B}^2 + 2 \sqrt{1 - \tilde{C}_{\!AB}^2} \ \tilde\epsilon_{\cal A} \, \tilde\epsilon_{\cal B} \ \geq \ \tilde{C}_{\!AB}^2 . \label{ETR_dimensionless}
\ea
The proof is detailed in the Methods section.
It can easily be checked (see Supporting Information, Part~D) that Ozawa's relation~\eqref{ozawa_relation} can directly be derived from our new relation~\eqref{ETR}. Interestingly, one observes in particular that Ozawa's relation remains valid even if one drops the term $\epsilon_{\cal A} \, \epsilon_{\cal B}$---precisely the term that appears in the Heisenberg-Arthurs-Kelly relation~\eqref{heisenberg_relation}.

Not only is our relation stronger than Ozawa's, it is actually \emph{tight}: for any $A, B$ and $\ket{\psi}$, any values $(\epsilon_{\cal A}, \epsilon_{\cal B})$ saturating inequality~(\ref{ETR}--\ref{ETR_dimensionless}) can be obtained. This can even be achieved by projective measurements on $\ket{\psi}$, without introducing any ancillary system: see Supporting Information (Part~C) for explicit examples.
Hence, contrary to previously derived relations, our new one does not only tell what {\it cannot} be done quantum mechanically, but also what {\it can} be done.

Figure~\ref{fig_ETRs} illustrates the constraints imposed by the three error-trade-off relations~\eqref{heisenberg_relation}, \eqref{ozawa_relation} and~(\ref{ETR}--\ref{ETR_dimensionless}), in the plane $(\tilde\epsilon_{\cal A}, \tilde\epsilon_{\cal B})$. Our new relation~\eqref{ETR_dimensionless} thus characterizes precisely the optimal trade-off between $\tilde\epsilon_{\cal A}$ and $\tilde\epsilon_{\cal B}$ in the general context of approximate measurements. The values below the thick red curve cannot be reached, while all values on and above the curve can be obtained, by tuning the actual measurements ${\cal A}$ and ${\cal B}$ depending on how well one wants to measure one observable, at the expense of increasing the error on the other.

\begin{figure}
\begin{center}
\epsfxsize=7cm
\epsfbox{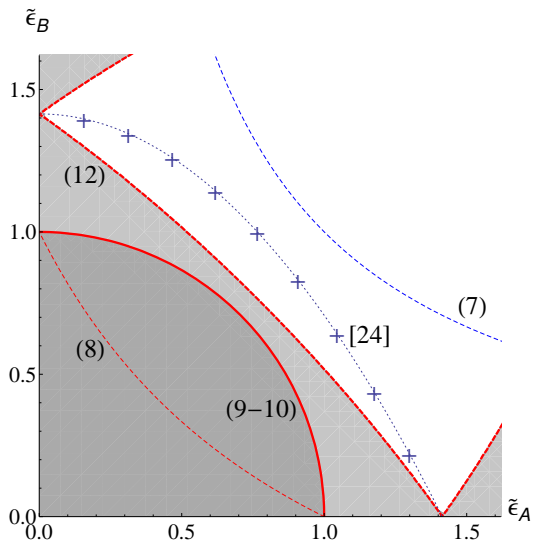}
\caption{{\bf Error-trade-off and error-disturbance relations.} The figure illustrates (in the case $\tilde C_{\!AB} = 1$) how the different error-trade-off and error-disturbance relations~\eqref{heisenberg_relation}, \eqref{ozawa_relation}, (\ref{ETR}--\ref{ETR_dimensionless}) and~\eqref{ETR_same_spectrum} restrict the possible values of the normalized rms errors $\tilde\epsilon_{\cal A}, \tilde\epsilon_{\cal B}$. Contrary to the Heisenberg-Arthurs-Kelly relation~\eqref{heisenberg_relation}, Ozawa's relation~\eqref{ozawa_relation} is always valid; however, it does not fully characterize the whole set of forbidden values for $(\tilde\epsilon_{\cal A}, \tilde\epsilon_{\cal B})$ (dark shaded area), which is precisely delimited by our new relation~(\ref{ETR}--\ref{ETR_dimensionless}). Imposing the same-spectrum assumption can imply strictly stronger constraints, such as Eq.~\eqref{ETR_same_spectrum} for the case where $A^2 = B^2 = \one$ and $\moy{A} = \moy{B} = 0$: more values of $(\tilde\epsilon_{\cal A}, \tilde\epsilon_{\cal B}) = (\epsilon_{\cal A}, \eta_{\cal B})$ are forbidden (light shaded area). \newline
The theoretical values expected from the experiment of Ref.~\cite{erhart12} are also shown; they do not saturate~\eqref{ETR_same_spectrum}, except for $\tilde\epsilon_{\cal A} = 0$ or $\tilde\epsilon_{\cal B} = 0$. On the other hand, in an ideal implementation the experiment of Ref.~\cite{rozema12_steinberg_expment} would saturate our inequality~\eqref{ETR_same_spectrum}.}
\label{fig_ETRs}
\end{center}
\end{figure}

\bigskip

\subsection*{The error-disturbance scenario \\ and the same-spectrum assumption}

Let us now consider a special case of our general framework for approximate joint measurements: that of the error-disturbance scenario, as first discussed by Heisenberg~\cite{heisenberg27}.

In this context, one considers the disturbance $\eta_{\cal B}$ in the statistics of one observable, $B$, due to the unsharp measurement of another observable, $A$. The latter is typically approximated by the measurement $M_A$ of a probe (or ancillary system, in the state $\ket{\xi}$), which interacts with the state $\ket{\psi}$ via a unitary transformation $U$~\cite{ozawa03}. In such a case, the approximation of $A$ corresponds to the measurement of ${\cal A} = U^\dagger (\one \otimes M_A) U$ on $\ket{\psi,\xi}$, while the perturbed measurement of $B$ after the interaction with the probe corresponds to the measurement of ${\cal B} = U^\dagger (B \otimes \one) U$ (note that ${\cal A}$ and ${\cal B}$ commute). This error-disturbance scenario can be cast into the same formalism as our joint measurement framework; the rms error $\epsilon_{\cal B}$ is now interpreted as the {\it rms disturbance} $\eta_{\cal B}$ of $B$, with formally the same definition~\cite{ozawa03}: $\eta_{\cal B} = \epsilon_{\cal B}$ as defined in~\eqref{def_epsB}.

Any error-trade-off relation derived in the more general framework of joint measurements thus remains valid in this error-disturbance scenario. In particular, when interpreting $\epsilon_{\cal B}$ as the rms disturbance $\eta_{\cal B}$, Ozawa's relation~\eqref{ozawa_relation} writes:
\ba
\epsilon_{\cal A} \ \eta_{\cal B} + \Delta B \ \epsilon_{\cal A} + \Delta A \ \eta_{\cal B} & \geq & |C_{\!AB}| . \label{ozawa_error_disturbance_relation}
\ea
This error-disturbance relation was actually introduced by Ozawa before its previous version~\eqref{ozawa_relation} for joint measurements~\cite{ozawa03}. In a similar manner, our new error-trade-off relation~(\ref{ETR}--\ref{ETR_dimensionless}) also implies a new error-disturbance relation, by simply replacing $\epsilon_{\cal B}$ by $\eta_{\cal B}$.

The difference with the previous, more general scenario of joint measurements is however not merely in the interpretation of $\epsilon_{\cal B}$.
A crucial point is that ${\cal B} = U^\dagger (B \otimes \one) U$ now has the same spectrum as $B$; it is furthermore typically (but often implicitly) assumed in the error-disturbance scenario that $M_A$, and hence ${\cal A} = U^\dagger (\one \otimes M_A) U$, also has the same spectrum as $A$~\cite{ozawa03,lund_wiseman,erhart12,rozema12_steinberg_expment}.
Because of these constraints, one may expect stronger restrictions on the possible values of $(\epsilon_{\cal A},\eta_{\cal B})$ to hold, and that stronger ``error-disturbance relations'' can be derived. (For simplicity, and by abuse of language, we call error-disturbance relation any error-trade-off relation derived under the same-spectrum assumption, as this is the crucial difference between the two scenarios.)

To illustrate this, let us now restrict our study to the case of dichotomic observables $A,B$ with eigenvalues $\pm 1$ (such that $A^2 = B^2 = \one$), and to states $\ket{\psi}$ for which $\moy{A} = \moy{B} = 0$ (which implies $\Delta A = \Delta B = 1$), as considered for instance in the experiments of Refs.~\cite{erhart12,rozema12_steinberg_expment}. We show in the Methods section that in this particular case, and with the same-spectrum assumption (hence, ${\cal A}^2 = {\cal B}^2 = \one$ as well), an analogous relation to~(\ref{ETR}--\ref{ETR_dimensionless}) holds, where $\epsilon_{\cal A}$ and $\epsilon_{\cal B}$ are replaced by $\epsilon_{\cal A} \sqrt{1-\frac{\epsilon_{\cal A}^2}{4}}$ and $\eta_{\cal B} \sqrt{1-\frac{\eta_{\cal B}^2}{4}}$, respectively:
\ba
&& \hspace{-.8cm} \epsilon_{\cal A}^2 \Big( 1-\frac{\epsilon_{\cal A}^2}{4} \Big) + \eta_{\cal B}^2 \Big( 1-\frac{\eta_{\cal B}^2}{4} \Big) \nonumber \\
&& \hspace{-.5cm} + \, 2 \sqrt{1 - C_{\!AB}^2} \ \, \epsilon_{\cal A} \sqrt{1-\frac{\epsilon_{\cal A}^2}{4}} \ \, \eta_{\cal B} \sqrt{1-\frac{\eta_{\cal B}^2}{4}} \ \geq \ C_{\!AB}^2 . \quad \label{ETR_same_spectrum}
\ea

This new error-disturbance relation is strictly stronger than~(\ref{ETR}--\ref{ETR_dimensionless}) (and than Ozawa's relation~\eqref{ozawa_error_disturbance_relation}). Furthermore, we show in the Supporting Information (Part~C) that it is tight when $|\sandwich{\psi}{AB}{\psi}|=1$: for any $A, B$ and $\ket{\psi}$ satisfying the constraints above, one can reach any values $(\epsilon_{\cal A},\eta_{\cal B})$ that saturate the inequality, using approximate measurements such that ${\cal A}^2 = {\cal B}^2 = \one$. The constraint that inequality~\eqref{ETR_same_spectrum} imposes on the possible values of $(\epsilon_{\cal A},\eta_{\cal B})$ is also illustrated on Figure~\ref{fig_ETRs}; note that contrary to our error-trade-off relation~(\ref{ETR}--\ref{ETR_dimensionless}), inequality~\eqref{ETR_same_spectrum} also bounds the possible values of $(\epsilon_{\cal A},\eta_{\cal B})$ from above (see also the inset of Figure~S1 in the Supporting Information).

Let us finally mention that if one imposes the same-spectrum assumption on ${\cal B}$ only (e.g., if one does not impose that $M_A$ in the specific error-disturbance scenario considered above has the same spectrum as $A$), one can also derive a similar, tight error-disturbance relation (under the assumptions now that $B^2 = \one$ and $\moy{B} = 0$), where only $\epsilon_{\cal B}$ in~(\ref{ETR}--\ref{ETR_dimensionless}) is replaced by $\eta_{\cal B} \sqrt{1-\frac{\eta_{\cal B}^2}{4}}$; see Eq.~(S20) in the Supporting Information (Part~B).

\subsection*{Example: qubits}

As an illustration of our error-trade-off and error-disturbance relations~(\ref{ETR}--\ref{ETR_dimensionless}) and~(\ref{ETR_same_spectrum}), let us consider the simplest case of qubits. We choose $\ket{\psi}$ to define the north pole of the Bloch sphere, and let $A = \hat {\mathbf a} \cdot \vec {\mathbf \sigma}$ and $B = \hat {\mathbf b} \cdot \vec {\mathbf \sigma}$ (where $\vec {\mathbf \sigma} = (\sigma_{\textsc{x}}, \sigma_{\textsc{y}}, \sigma_{\textsc{z}})$ denotes a vector composed of the 3 Pauli matrices) be two $\pm 1$-valued qubit observables characterized by unit vectors $\hat {\mathbf a}$ and $\hat {\mathbf b}$ on the Bloch sphere, of polar and azimuthal angles $\theta_a, \phi_a$ and $\theta_b, \phi_b$, respectively. We take $\theta_a, \theta_b \in [0,\pi]$ and we assume, for convenience, that $\phi = \phi_b - \phi_a \in [0, \frac{\pi}{2}]$.

\medskip

For such a choice of $\ket{\psi}$, $A$ and $B$, one finds $\Delta A = \sin\theta_a$, $\Delta B = \sin\theta_b$, and $C_{\!AB} = \sin\theta_a \sin\theta_b \sin\phi$. Equation~\eqref{ETR_dimensionless} then writes, for $\sin\theta_a \sin\theta_b \neq 0$:
\ba
\tilde\epsilon_{\cal A}^2 + \tilde\epsilon_{\cal B}^2 + 2 \cos\phi \ \tilde\epsilon_{\cal A} \, \tilde\epsilon_{\cal B} \ \geq \ \sin^2\!\phi .
\ea
One can check that this error-trade-off relation can simply be saturated by defining ${\cal A}$ and ${\cal B}$ to be projective measurements in the same eigenbasis, specified by any unit vector $\hat {\mathbf m}$ on the Bloch sphere with polar and azimuthal angles $\theta \in ]0,\pi[$ and $\varphi \in [\phi_a, \phi_b]$. More specifically, for ${\cal A} = [(\cos\theta_a{-}\cos\theta \, \hat {\mathbf m} \cdot \hat {\mathbf a}) \one{+}(\hat {\mathbf m} \cdot \hat {\mathbf a}{-}\cos\theta_a \cos\theta) \hat {\mathbf m} \cdot \vec {\mathbf \sigma}]/\sin^2\!\theta$ and ${\cal B} = [(\cos\theta_b{-}\cos\theta \, \hat {\mathbf m} \cdot \hat {\mathbf b}) \one{+}(\hat {\mathbf m} \cdot \hat {\mathbf b}{-}\cos\theta_b \cos\theta) \hat {\mathbf m} \cdot \vec {\mathbf \sigma}]/\sin^2\!\theta$, one obtains
\ba
\tilde\epsilon_{\cal A} \ = \ \sin (\varphi - \phi_a), \quad \tilde\epsilon_{\cal B} \ = \ \sin (\phi_b - \varphi).
\ea
Interestingly, $\tilde\epsilon_{\cal A}$ and $\tilde\epsilon_{\cal B}$ are independent of the polar angle $\theta$ of $\hat {\mathbf m}$. Note in particular that one can thus have for instance $\tilde\epsilon_{\cal A} = 0$ even when ${\cal A}$ is quite different from $A$; also, when $\hat {\mathbf m}$ comes close to the north or south poles of the Bloch sphere, one can have arbitrarily close projection directions $\hat {\mathbf m}$ leading to quite different values for $\tilde\epsilon_{\cal A}$ and $\tilde\epsilon_{\cal B}$---in our case, $(\tilde\epsilon_{\cal A}, \tilde\epsilon_{\cal B}) = (0,\sin\phi)$ for $\varphi=\phi_a$ and $(\tilde\epsilon_{\cal A}, \tilde\epsilon_{\cal B}) = (\sin\phi,0)$ for $\varphi=\phi_b$. These somewhat unexpected properties might however only be artefacts of the particular definitions of errors we use; it would be interesting to investigate possible alternative definitions that do not exhibit such behaviours.

\medskip

Let us now impose that ${\cal A}$ and ${\cal B}$ have the same spectrum as $A$ and $B$; i.e., since $A$ and $B$ are here $\pm 1$-valued observables, ${\cal A}^2 = A^2 = {\cal B}^2 = B^2 = \one$. Assuming furthermore that $\theta_a = \theta_b = \frac{\pi}{2}$, we have $\moy{A} = \moy{B} = 0$. Inequality~\eqref{ETR_same_spectrum} then applies; it can be saturated in the error-disturbance scenario [with ${\cal A} = U^\dagger (\one \otimes M_A) U$ and ${\cal B} = U^\dagger (B \otimes \one) U$] in the following way: let ${\mathbf \sigma}_\varphi = \cos\varphi \ \sigma_{\textsc{x}} + \sin\varphi \ \sigma_{\textsc{y}}$, for $\varphi \in [\phi_a,\phi_b]$ (i.e., ${\mathbf \sigma}_\varphi = \hat {\mathbf m} \cdot \vec {\mathbf \sigma}$ with $\theta = \frac{\pi}{2}$), and let $\ket{m_{\varphi}^\pm}$ be its (normalized) eigenvectors, corresponding to its eigenvalues $\pm 1$; we then define $M_A = {\mathbf \sigma}_\varphi$, $U = (U_{\textrm{R}}\!\otimes\!\one).U_{\textrm{copy}}$ with $U_{\textrm{copy}}$ a unitary such that $U_{\textrm{copy}} \ket{m_{\varphi}^+,\xi} = \ket{m_{\varphi}^+,m_{\varphi}^+}$ and $U_{\textrm{copy}} \ket{m_{\varphi}^-,\xi} = \ket{m_{\varphi}^-,m_{\varphi}^-}$ (e.g., with $\ket{\xi} = \ket{m_{\varphi}^+}$, a {\textsc{cnot}} unitary~\cite{NielsenChuang} in the $\{\ket{m_{\varphi}^\pm}\}$ basis), and $U_{\textrm{R}} = e^{-i\frac{\phi_b-\varphi}{2}\sigma_{\textsc{z}}}$; one then gets
\ba
\epsilon_{\cal A} \ = \ 2\,\sin\!\left(\!\frac{\varphi - \phi_a}{2}\!\right), \quad \eta_{\cal B} \ = \ 2\,\sin\!\left(\!\frac{\phi_b - \varphi}{2}\!\right) .
\ea

\subsubsection*{On the recent experimental tests of Refs.~\cite{erhart12} and~\cite{rozema12_steinberg_expment}}

Two experiments were recently reported, showing a violation of the Heisenberg-Arthurs-Kelly relation~\eqref{heisenberg_relation} (more specifically, of its error-disturbance version, where $\epsilon_{\cal B}$ is replaced by $\eta_{\cal B}$) and a verification of Ozawa's error-disturbance relation~\eqref{ozawa_error_disturbance_relation} in qubit systems.

The first experiment~\cite{erhart12} measured neutron spins, using the indirect method proposed in~\cite{Ozawa04_gen_measurements} to estimate the rms errors and rms disturbances $\epsilon_{\cal A}, \eta_{\cal B}$. $A = \sigma_{\textsc{x}}$ was estimated from the measurement of ${\mathbf \sigma}_\varphi = \cos\varphi \ \sigma_{\textsc{x}} + \sin\varphi \ \sigma_{\textsc{y}}$ on $\ket{\psi} = \ket{+\textsc{z}}$ (the eigenstate of $\sigma_{\textsc{z}}$, corresponding to its eigenvalue $+1$), and was followed by the measurement of $B = \sigma_{\textsc{y}}$; note that $A^2 = B^2 = \one$, $\moy{A} = \moy{B} = 0$, and $C_{\!AB} = 1$. The expected theoretical values for the rms errors and rms disturbances were $\epsilon_{\cal A} = 2 \sin \frac{\varphi}{2}$ and $\eta_{\cal B} = \sqrt{2} \cos\varphi$. These are plotted on Figure~\ref{fig_ETRs}; one can see that they are not optimal as they do not saturate our tight error-disturbance relation~\eqref{ETR_same_spectrum}. From the analysis above, it appears that adding a rotation $U_{\textrm{R}}$ before the measurement of $B$ would however be enough to allow the experimental setup used in~\cite{erhart12} to saturate inequality~\eqref{ETR_same_spectrum}.

The second experiment~\cite{rozema12_steinberg_expment} measured the polarization of single photons, using weak measurements as proposed in~\cite{lund_wiseman} to estimate the rms errors and rms disturbances.
$A$ was approximated from a measurement of variable strength based on a {\textsc{cnot}} unitary. Because the weak measurements used to estimate $\epsilon_{\cal A}$ and $\eta_{\cal B}$ are not infinitely weak, they slightly perturb the state of the photon, adding some noise. However, in an ideal implementation the experiment of~\cite{rozema12_steinberg_expment} would saturate the bound of our new error-disturbance relation~\eqref{ETR_same_spectrum}.

\medskip

To finish with, let us however emphasize that no experiment will ever {\it demonstrate} the universal validity of an ``uncertainty relation'' (or error-trade-off, or error-disturbance relations), despite what the title of Ref.~\cite{erhart12} suggests.
First note that in order for such experiments to be conclusive, one needs to perfectly trust the implementation; otherwise, systematic errors in the preparation of $\ket{\psi}$ or in the estimation procedure for $\epsilon_{\cal A}$ and $\epsilon_{\cal B}$ could radically change the values of the different terms in the relation, leading to unjustified conclusions (and possibly even ``showing'' a violation of a valid relation!). All one can do is then to check that in that particular (perfectly trusted) implementation, for some particular $A, B$ and $\ket{\psi}$ and for the particular approximations ${\cal A}$ and ${\cal B}$ implemented in that experiment, the error-trade-off or error-disturbance relation of interest is satisfied. There is indeed no way experimentally to test all possible approximate joint measurement strategies, and the particular choice of ${\cal A}$ and ${\cal B}$ could be non-optimal (as e.g. in~\cite{erhart12}).
It is of course trivial to obtain data satisfying an error-trade-off relation, if one does not try to optimize the values of $(\epsilon_{\cal A},\epsilon_{\cal B})$: if the relation is universally valid, then {\it any} measurement strategy (e.g. outputting random results) will satisfy it! One can even similarly trivially violate the Heisenberg-Arthurs-Kelly relation~\eqref{heisenberg_relation}, e.g. by actually measuring $A$ perfectly (so that $\epsilon_{\cal A} = 0$), and outputting any values to approximate $B$ (as long as $\epsilon_{\cal B} < \infty$). What is less trivial and therefore more interesting is to show experimentally that a tight error-trade-off or error-disturbance relation can indeed be {\it saturated}.

\section*{DISCUSSION}

We have presented a new, state-dependent error-trade-off relation [Eqs.~(\ref{ETR}--\ref{ETR_dimensionless}], in the general framework of approximate joint measurements. Our relation is universally valid, whether the Hilbert spaces of interest are of finite---as in our qubit example---or infinite dimensions (provided $\ket{\psi,\xi}$ is in the domains of $A^{(2)}, B^{(2)}, {\cal A}^{(2)}, {\cal B}^{(2)}$, and of all their products that are involved in the proof of~(\ref{ETR}--\ref{ETR_dimensionless}))---e.g. for the measurement of position and momentum, as first considered by Heisenberg. Note also that although the framework for joint measurements was presented for pure states, it can easily be generalized to mixed states, and Eqs.~(\ref{ETR}--\ref{ETR_dimensionless}) still hold.
Importantly, our new error-trade-off relation was shown to be \emph{tight}, and therefore to fully characterize the whole set of possible values of rms errors $(\epsilon_{\cal A},\epsilon_{\cal B})$ (in the case of pure states; our relation may in general not be tight for mixed states). This answers the question posed in the title, for pure states and when the quality of the measurement is quantified by these rms errors.

Error trade-off relations imply error-disturbance relations as a particular case. However, because of the same-spectrum assumption, strictly stronger relations can in general be derived in the error-disturbance scenario; we presented an example of such an error-disturbance relation, for $\pm1$-valued observables with $\moy{A} = \moy{B} = 0$, allowing us to highlight a quantitative difference between the two scenarios.
The derivation of a more general relation under the same-spectrum assumption is left for future work.

Our relations apply to the projective measurement of two observables $A$ and $B$. It would be interesting to see if these could be generalized to some POVMs (see~\cite{hall03_POM,hall04}, however, for the difficulties encountered), or to more observables~\cite{robertson2}. In the error-disturbance scenario, it may also be desirable to quantify the disturbance of the quantum state directly, rather than of the statistics of another observable; this is left as an open problem.

As highlighted above, our relations bound the rms errors of $A$ and $B$, as defined in Eqs.~(\ref{def_epsA}--\ref{def_epsB}). In the context of quantum information, one may however prefer to use information-theoretic definitions for the quality of approximations. Developing such definitions, and deriving corresponding universally valid and tight error-trade-off or error-disturbance relations would certainly be an interesting direction of research. This may indeed give a clearer operational meaning to such relations, and would be more adapted to their use in possible applications (in the same way e.g. as entropic uncertainty relations are useful to prove the security of quantum cryptographic protocols~\cite{WehnerWinter,Berta10}). This will involve radically different proof techniques, which may also allow one to consider error trade-offs in general probabilistic theories, not restricted to quantum theory and to its Hilbert space formalism. This will undoubtedly give more insight on the still puzzling, multi-faceted uncertainty principle.

\section*{METHODS}

In order to prove below our error-trade-off and error-disturbance relations~(\ref{ETR}--\ref{ETR_dimensionless}) and~(\ref{ETR_same_spectrum}), we start by introducing two general inequalities for real vectors.

\subsection*{Lemmas: Geometric inequalities}

Let $\hat a, \hat b$ be two unit vectors of a Euclidean space ${\cal E}$, and let us define $\chi = \hat a \cdot \hat b$. We prove in the Supporting Information (Part~B) the following lemmas:

\begin{lem}\label{ref_lemma1}
For any two \emph{orthogonal vectors} $\vec x$ and $\vec y$ of ${\cal E}$, one has
\ba
\| \hat a - \vec x \|^2 + \| \hat b - \vec y \|^2 + 2 \sqrt{1 - \chi^2} \ \| \hat a - \vec x \| \, \| \hat b - \vec y \| \ \geq \ \chi^2 \, . \hspace{-5mm} \nonumber \\ \label{ineq_lemma1}
\ea
\end{lem}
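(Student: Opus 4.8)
The plan is to reduce the inequality to a two-dimensional problem and then treat it as an optimization. First I would note that since $\vec x$ and $\vec y$ are orthogonal, I can decompose each of $\hat a$ and $\hat b$ relative to the plane (or pair of axes) they span. Write $\hat a = \alpha\,\hat u + \alpha'\,\hat u' + \vec a_\perp$ where $\hat u$ is the unit vector along $\vec x$ (assume $\vec x\neq 0$; the degenerate cases are handled separately), $\hat u'$ is the unit vector along $\vec y$, and $\vec a_\perp$ is orthogonal to both; similarly $\hat b = \beta\,\hat u + \beta'\,\hat u' + \vec b_\perp$. Then $\|\hat a - \vec x\|^2 = (\alpha - \|\vec x\|)^2 + \alpha'^2 + \|\vec a_\perp\|^2 \geq (\alpha - \|\vec x\|)^2 + \alpha'^2$, and similarly for $\|\hat b - \vec y\|^2$. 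Since both $\|\vec x\|$ and $\|\vec y\|$ are free nonnegative reals, the left-hand side is minimized (over the lengths of $\vec x,\vec y$, and the perpendicular components, which we just drop) when $\|\vec x\| = |\alpha|$ can be chosen — actually one must be careful, since $\|\vec x\|\geq 0$, so the optimal choice gives $\|\hat a - \vec x\|^2 \geq \alpha'^2$ when $\alpha \geq 0$, with an analogous statement for $\hat b$. This shows it suffices to prove $\alpha'^2 + \beta'^2 + 2\sqrt{1-\chi^2}\,|\alpha'|\,|\beta'| \geq \chi^2$ subject to the geometric constraints linking $\alpha,\alpha',\beta,\beta'$ and $\chi = \hat a\cdot\hat b$.

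Next I would extract exactly what those constraints are. We have $\|\hat a\|=\|\hat b\|=1$, so $\alpha^2 + \alpha'^2 + \|\vec a_\perp\|^2 = 1$ and likewise for $\hat b$; and $\chi = \alpha\beta + \alpha'\beta' + \vec a_\perp\cdot\vec b_\perp$. Dropping the perpendicular parts can only increase freedom, so the worst case has $\vec a_\perp = \vec b_\perp = 0$, i.e. $\hat a, \hat b$ lie in the plane spanned by $\hat u,\hat u'$. Then $\alpha^2 + \alpha'^2 = 1$, $\beta^2 + \beta'^2 = 1$, $\chi = \alpha\beta + \alpha'\beta'$. Parametrize $\hat a = (\cos s, \sin s)$, $\hat b = (\cos t, \sin t)$ in that plane, so $\alpha' = \sin s$, $\beta' = \sin t$, $\chi = \cos(s-t)$. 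The inequality to prove becomes $\sin^2 s + \sin^2 t + 2\,|\sin(s-t)|\cdot|\sin s|\,|\sin t| \geq \cos^2(s-t)$, using $\sqrt{1-\chi^2} = |\sin(s-t)|$. The remaining step is a trigonometric identity/inequality: I expect that $\cos^2(s-t) - \sin^2 s - \sin^2 t$ simplifies, via product-to-sum formulas, to something like $-2\sin s\,\sin t\,\cos(s-t) \cdot(\text{sign factors})$ or $\cos(2s) + \cos(2t) - \ldots$, and that the claimed bound reduces to $\big(|\sin s|\,|\sin t|\big)$ times $\big(|\sin(s-t)| \pm \cos(s-t)\big) \geq 0$ after the dust settles — genuinely elementary once arranged correctly. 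One clean route: show $\cos^2(s-t) - \sin^2 s - \sin^2 t = -2\sin s\sin t\cos(s-t) - \text{(perfect square)}$, or directly verify $\big(|\sin s| + |\sin t|\cos(s-t)\big)^2 + \sin^2 t\,\sin^2(s-t) \geq$ the needed quantity; I would pick whichever factorization makes the sign bookkeeping shortest.

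The main obstacle I anticipate is not any single deep step but the case analysis around signs and degeneracies: handling $\vec x = 0$ or $\vec y = 0$, the sign of $\alpha$ (when $\alpha < 0$ the optimal $\|\vec x\|$ is $0$ and one gets an even stronger bound, so this case is easy but must be stated), and the absolute values in the final trigonometric inequality. It would be cleanest to observe up front that replacing $\hat a$ by $-\hat a$ flips the sign of $\chi$ but leaves $\chi^2$, $\sqrt{1-\chi^2}$, and $\|\hat a - \vec x\|$-type terms unchanged after optimizing over $\vec x$ — so without loss of generality one may assume $\alpha, \beta \geq 0$, which removes most of the sign headaches. With that normalization in place the trigonometric endgame should collapse to a single application of $\cos(s-t) = \cos s\cos t + \sin s\sin t$ together with $\sin^2 + \cos^2 = 1$, and the inequality becomes manifest. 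Since the excerpt defers the actual proof to the Supporting Information, I would state the reduction to the planar trigonometric inequality in the main text and relegate the sign/degeneracy bookkeeping to the appendix.
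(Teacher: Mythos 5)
Your overall strategy---replace $\vec x,\vec y$ by their directions, bound each distance from below by the length of a perpendicular component, and reduce to a low-dimensional trigonometric inequality---is in the same spirit as the paper's proof (which reduces Lemma~\ref{ref_lemma1} to Lemma~\ref{ref_lemma2}). However, the specific reduction you propose has a genuine gap: you discard the out-of-plane components $\vec a_\perp,\vec b_\perp$ from the lower bounds, keeping only $\|\hat a-\vec x\|^2\geq \alpha'^2$ and the analogue for $\hat b$, and then assert it suffices to prove the resulting inequality for coplanar configurations. The resulting ``sufficient'' inequality is in fact false: take $\hat a=\hat b=\hat w$ with $\hat w$ a unit vector orthogonal to both $\vec x$ and $\vec y$ (possible whenever $\dim{\cal E}\geq 3$); then $\chi=1$ while all in-plane components of $\hat a$ and $\hat b$ vanish, so your reduced left-hand side is $0<\chi^2$. (The lemma itself survives there precisely because of the component you dropped: $\|\hat a-\vec x\|^2=1+\|\vec x\|^2\geq 1$.) The correct reduction keeps the full perpendicular part, $\|\hat a-\vec x\|^2\geq 1-(\hat a\cdot\hat x)^2=a_\perp^2$ and $\|\hat b-\vec y\|^2\geq 1-(\hat b\cdot\hat y)^2=b_\perp^2$, which is the content of Lemma~\ref{ref_lemma2}; and the non-coplanar case cannot be dismissed by saying that ``dropping the perpendicular parts can only increase freedom''---the paper handles it via $(\hat b\cdot\hat x)^2+(\hat b\cdot\hat y)^2\leq\|\hat b\|^2=1$, which is exactly where the orthogonality of $\hat x$ and $\hat y$ enters.

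A second, smaller problem: even granting coplanarity, your planar inequality is miswritten. With $\hat a=(\cos s,\sin s)$, $\hat b=(\cos t,\sin t)$ and $\vec x,\vec y$ along the first and second axes respectively, the component of $\hat b$ perpendicular to $\vec y$ is $\cos t$, not $\sin t$; the inequality to prove is $\sin^2 s+\cos^2 t+2|\sin(s-t)|\,|\sin s\cos t|\geq\cos^2(s-t)$, whereas the one you state already fails at $s=t=0$ (left-hand side $0$, right-hand side $1$). The corrected version is true and follows from $\bigl(|\cos t|+\sqrt{1-\chi^2}\,|\sin s|\bigr)^2\geq\chi^2\cos^2 s$, i.e., from $|\hat b\cdot\hat x|\geq|\chi\,(\hat a\cdot\hat x)|-\sqrt{1-\chi^2}\,\sqrt{1-(\hat a\cdot\hat x)^2}$---but since your write-up leaves the trigonometric endgame to ``whichever factorization makes the sign bookkeeping shortest,'' this step is not actually established either. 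In short, the plan needs both the stronger distance bounds and an explicit argument covering the non-coplanar case before it becomes a proof.
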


\begin{lem}\label{ref_lemma2}
For any two \emph{orthogonal unit vectors} $\hat x$ and $\hat y$ of ${\cal E}$, defining $a_\perp = \sqrt{1-(\hat a \cdot \hat x)^2}$ and $b_\perp = \sqrt{1-(\hat b \cdot \hat y)^2}$, one has
\ba
a_\perp^2 + b_\perp^2 + 2 \sqrt{1 - \chi^2} \ a_\perp \, b_\perp \ \geq \ \chi^2. \label{ineq_lemma2}
\ea
\end{lem}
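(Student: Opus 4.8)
The plan is to derive Lemma~\ref{ref_lemma2} as a direct specialisation of Lemma~\ref{ref_lemma1}, applied to a particular choice of orthogonal pair. The starting observation is that $a_\perp$ and $b_\perp$ measure nothing but the distances from $\hat a$ and $\hat b$ to the lines spanned by $\hat x$ and $\hat y$: writing the orthogonal decompositions $\hat a = (\hat a \cdot \hat x)\,\hat x + \hat a_\perp$ with $\hat a_\perp \perp \hat x$, and $\hat b = (\hat b \cdot \hat y)\,\hat y + \hat b_\perp$ with $\hat b_\perp \perp \hat y$, Pythagoras gives $\|\hat a_\perp\| = \sqrt{1-(\hat a\cdot\hat x)^2} = a_\perp$ and $\|\hat b_\perp\| = \sqrt{1-(\hat b\cdot\hat y)^2} = b_\perp$.

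Concretely, I would set $\vec x := (\hat a \cdot \hat x)\,\hat x$ and $\vec y := (\hat b \cdot \hat y)\,\hat y$. Because $\hat x$ and $\hat y$ are orthogonal unit vectors, $\vec x$ and $\vec y$ are orthogonal as well, since $\vec x \cdot \vec y = (\hat a\cdot\hat x)(\hat b\cdot\hat y)(\hat x\cdot\hat y) = 0$; hence this pair is an admissible choice in Lemma~\ref{ref_lemma1}. A one-line computation then gives $\|\hat a - \vec x\|^2 = 1 - 2(\hat a\cdot\hat x)^2 + (\hat a\cdot\hat x)^2 = a_\perp^2$ and, identically, $\|\hat b - \vec y\|^2 = b_\perp^2$. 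Substituting $\|\hat a - \vec x\| = a_\perp$ and $\|\hat b - \vec y\| = b_\perp$ into~\eqref{ineq_lemma1} reproduces~\eqref{ineq_lemma2} verbatim, which completes the argument.

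I do not expect any genuine obstacle here: once Lemma~\ref{ref_lemma1} is granted, Lemma~\ref{ref_lemma2} costs only the two elementary facts above. The single point worth stating explicitly is that Lemma~\ref{ref_lemma1} imposes no link between $\vec x,\vec y$ and $\hat a,\hat b$ beyond $\vec x \perp \vec y$, so it is exactly the orthonormality of $\hat x,\hat y$ that makes the chosen projections legitimate inputs. Degenerate configurations need no special handling---if, say, $\hat a \parallel \hat x$ then $\vec x = \hat a$, $a_\perp = 0$, and the substitution still goes through---and one may even, if a self-contained proof is preferred, repeat the argument used for Lemma~\ref{ref_lemma1} with $\vec x,\vec y$ constrained to lie along $\hat x,\hat y$ from the outset. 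The real geometric work sits in Lemma~\ref{ref_lemma1}, whose proof is where one must optimise the left-hand side over the orthogonal pair and identify the governing quadratic form.
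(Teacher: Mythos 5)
Your specialisation is correct as far as it goes: taking $\vec x = (\hat a\cdot\hat x)\,\hat x$ and $\vec y = (\hat b\cdot\hat y)\,\hat y$ does give an orthogonal pair with $\|\hat a - \vec x\| = a_\perp$ and $\|\hat b - \vec y\| = b_\perp$, so Lemma~\ref{ref_lemma1} implies Lemma~\ref{ref_lemma2} verbatim. The problem is that this is exactly backwards relative to the paper's logical structure, and you have supplied no independent proof of Lemma~\ref{ref_lemma1}. In the Supporting Information the paper proves Lemma~\ref{ref_lemma2} \emph{first}, from scratch, and then deduces Lemma~\ref{ref_lemma1} from it by the mirror image of your observation: for arbitrary $\vec x \perp \vec y$ one has $\|\hat a - \vec x\|^2 = \|\hat a - a_x\hat x\|^2 + \|a_x\hat x - \vec x\|^2 \geq a_\perp^2$ (and similarly for $\hat b$), so the left-hand side of \eqref{ineq_lemma1} dominates that of \eqref{ineq_lemma2}. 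The two lemmas are thus easily interderivable, and your proposal establishes only the easy equivalence in one direction; if it were spliced into the paper as the proof of Lemma~\ref{ref_lemma2}, the overall argument would be circular.

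The substantive content --- why $a_\perp^2 + b_\perp^2 + 2\sqrt{1-\chi^2}\,a_\perp b_\perp \geq \chi^2$ actually holds --- never appears in your write-up; you explicitly defer it to a proof of Lemma~\ref{ref_lemma1} that you describe only as ``optimise the left-hand side over the orthogonal pair,'' which is not an argument. The missing idea is the following. Decompose $\hat b = \chi\,\hat a + \sqrt{1-\chi^2}\,\hat a_\perp^{\,(b)}$ and $\hat x = a_x\,\hat a + a_\perp\,\hat a_\perp^{\,(x)}$ with $a_x = \hat a\cdot\hat x$ and $\hat a_\perp^{\,(b)}, \hat a_\perp^{\,(x)}$ unit (or zero) vectors orthogonal to $\hat a$. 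Since $\hat x, \hat y$ are orthonormal, Bessel's inequality gives $(\hat b\cdot\hat x)^2 + (\hat b\cdot\hat y)^2 \leq \|\hat b\|^2 = 1$, hence
\begin{equation*}
b_\perp \;=\; \sqrt{1-(\hat b\cdot\hat y)^2} \;\geq\; |\hat b\cdot\hat x| \;=\; \bigl|\chi a_x + \sqrt{1-\chi^2}\,a_\perp\,\hat a_\perp^{\,(b)}\!\cdot\hat a_\perp^{\,(x)}\bigr| \;\geq\; |\chi a_x| - \sqrt{1-\chi^2}\,a_\perp ,
\end{equation*}
and squaring $\bigl(b_\perp + \sqrt{1-\chi^2}\,a_\perp\bigr)^2 \geq \chi^2 a_x^2 = \chi^2(1-a_\perp^2)$ rearranges to \eqref{ineq_lemma2}. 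That chain of inequalities is the real work of the lemma, and it is absent from your proposal.
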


\subsection*{Proof of our error-trade-off relation~(\ref{ETR}--\ref{ETR_dimensionless})}

Let us now define, in the non-trivial case $\Delta A \ \Delta B > 0$, the ket vectors
\ba
\ket{a} = \frac{A \!\otimes\! \one-\moy{A}}{\Delta A} \ket{\psi,\xi}, & \!\! & \ket{b} = \frac{B \!\otimes\! \one-\moy{B}}{\Delta B} \ket{\psi,\xi}, \quad \quad \label{def_keta_ketb} \\[1mm]
\ket{x} = \frac{{\cal A}-\moy{A}}{\Delta A} \ket{\psi,\xi}, & \!\! & \ket{y} = \frac{{\cal B}-\moy{B}}{\Delta B} \ket{\psi,\xi}. \label{def_ketx_kety}
\ea
By writing these vectors in any orthonormal basis of ${\cal H} \otimes {\cal K}$ (e.g., the common eigenbasis of ${\cal A}$ and ${\cal B}$), and denoting by ${\mathrm{Re}}$ and ${\mathrm{Im}}$ their real and imaginary parts, respectively, one can define the following real vectors:
\ba
\hat a = \left(
\begin{array}{c}
{\mathrm{Re}}\ket{a} \\
{\mathrm{Im}}\ket{a}
\end{array}
\right), \
\hat b = \left(
\begin{array}{c}
{\mathrm{Im}}\ket{b} \\
-{\mathrm{Re}}\ket{b}
\end{array}
\right), \label{def_veca_vecb} \\[1mm]
\vec x = \left(
\begin{array}{c}
{\mathrm{Re}}\ket{x} \\
{\mathrm{Im}}\ket{x}
\end{array}
\right), \
\vec y = \left(
\begin{array}{c}
{\mathrm{Im}}\ket{y} \\
-{\mathrm{Re}}\ket{y}
\end{array}
\right). \label{def_vecx_vecy}
\ea

\begin{figure}
\begin{center}
\epsfxsize=7.5cm
\epsfbox{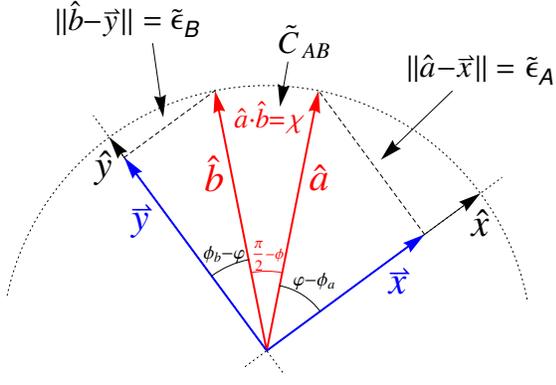}
\caption{{\bf Geometric construction used in the proof of our general error-trade-off relation~(\ref{ETR}--\ref{ETR_dimensionless}).} The real vectors $\hat a$, $\hat b$, $\vec x$ and $\vec y$ satisfy the assumptions of Lemma~\ref{ref_lemma1}; the particular choice of vectors illustrated here (for which $\chi = \sin\phi = \sin(\phi_b{-}\phi_a)$, $||\hat a - \vec x|| = \sin(\varphi - \phi_a)$ and $||\hat b - \vec y|| = \sin(\phi_b - \varphi)$) saturates inequality~\eqref{ineq_lemma1}, which quantifies the optimal trade-off between the distance from the unit vector $\hat a$ to an axis along a direction $\hat x$, and from the unit vector $\hat b$ to an axis along a direction $\hat y$, orthogonal to $\hat x$.}
\label{fig_proof_ETR}
\end{center}
\end{figure}

One then has
\ba
\|\hat a\|^2 &\!=\!& ({\mathrm{Re}}\ket{a})^{\!\top} \!\!\cdot\! ({\mathrm{Re}}\ket{a}) + ({\mathrm{Im}}\ket{a})^{\!\top} \!\!\cdot\! ({\mathrm{Im}}\ket{a}) = \braket{a}{a} = 1, \nonumber \\[-1mm] \label{a2_sigA2} \\
\|\hat b\|^2 &\!=\!& 1,
\ea
\ba
\|\vec x - \hat a\|^2 &\!=\!& (\bra{x}-\bra{a})(\ket{x}-\ket{a}) \nonumber \\
 &\!=\!& \bra{\psi,\xi}\Big(\frac{{\cal A}{-}A \otimes \one}{\Delta A}\Big)^{\!2}\ket{\psi,\xi} = \frac{\epsilon_{\cal A}^2}{\Delta A^2} = \tilde\epsilon_{\cal A}^2, \qquad \ \label{epsA_dist_x_a} \\
\|\vec y - \hat b\|^2 &\!=\!& \frac{\epsilon_{\cal B}^2}{\Delta B^2} \, = \, \tilde\epsilon_{\cal B}^2, \label{epsB_dist_y_b}
\ea
\vspace{-5mm}
\ba
\hat a \cdot \hat b &=& ({\mathrm{Re}}\ket{a})^{\!\top} \!\!\cdot ({\mathrm{Im}}\ket{b}) - ({\mathrm{Im}}\ket{a})^{\!\top} \!\!\cdot ({\mathrm{Re}}\ket{b}) = {\mathrm{Im}}\,\braket{a}{b} \nonumber \\
 &=& \frac{1}{2i} \frac{\sandwich{\psi}{[A,B]}{\psi}}{\Delta A \, \Delta B} = \frac{C_{\!AB}}{\Delta A \, \Delta B} \, = \, \tilde C_{\!AB}, \qquad \label{ab_CAB} \\[2mm]
\vec x \cdot \vec y &=& \frac{1}{2i} \frac{\sandwich{\psi,\xi}{[{\cal A}, {\cal B}]}{\psi,\xi}}{\Delta A \, \Delta B} = 0.
\ea
Hence, the (normalized) rms errors $\tilde\epsilon_{\cal A}$, $\tilde\epsilon_{\cal B}$ can be interpreted as distances between vectors~\cite{hall03_POM,Ozawa03_physical_content}, while the commutativity of ${\cal A}$ and ${\cal B}$ translates into an orthogonality condition for $\vec x$ and $\vec y$.

The vectors $\hat a$, $\hat b$, $\vec x$ and $\vec y$ thus satisfy the assumptions of Lemma~\ref{ref_lemma1}, that $||\hat a|| = ||\hat b|| = 1$ and $\vec x \cdot \vec y = 0$ (see Figure~\ref{fig_proof_ETR}). Together with Eqs.~(\ref{epsA_dist_x_a}--\ref{ab_CAB}), inequality~\eqref{ineq_lemma1} implies our general error-trade-off relation for joint measurements~\eqref{ETR_dimensionless}. After multiplication by $\Delta A^2 \Delta B^2$, we obtain Eq.~\eqref{ETR} (for which the case $\Delta A \, \Delta B = 0$ is trivial, as it implies $C_{\!AB} = 0$).

\subsection*{Proof of our error-disturbance relation~(\ref{ETR_same_spectrum}), for the case where $A^2 = B^2 = {\cal A}^2 = {\cal B}^2 = \one$ and $\moy{A} = \moy{B} = 0$}

\begin{figure}
\begin{center}
\epsfxsize=7.5cm
\epsfbox{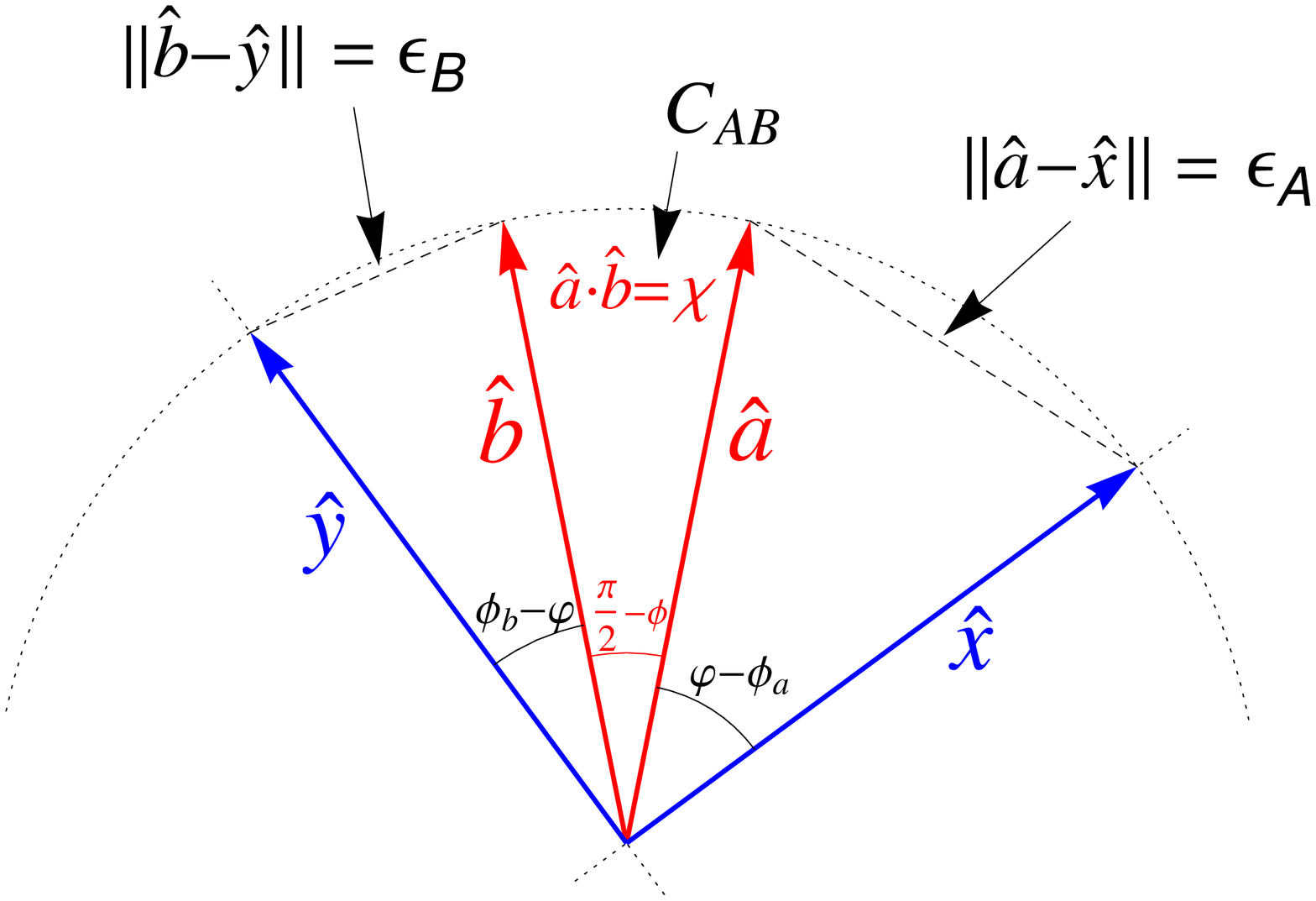}
\caption{{\bf Geometric construction used in the proof of our error-disturbance relation~(\ref{ETR_same_spectrum})} (for the case where $A^2 = B^2 = {\cal A}^2 = {\cal B}^2 = \one$ and $\moy{A} = \moy{B} = 0$). The real vectors $\hat a$, $\hat b$, $\hat x$ and $\hat y$ satisfy the assumptions of Lemma~\ref{ref_lemma2}; the particular choice of vectors illustrated here (for which $\chi = \sin\phi = \sin(\phi_b{-}\phi_a)$, $||\hat a - \hat x|| = 2\sin\frac{\varphi - \phi_a}{2}$ and $||\hat b - \hat y|| = 2\sin\frac{\phi_b-\varphi}{2}\ $) saturates inequality~\eqref{ineq_lemma2}, which quantifies the optimal trade-off between the distance from the unit vector $\hat a$ to another unit vector $\hat x$, and from the unit vector $\hat b$ to another unit vector $\hat y$, orthogonal to $\hat x$.}
\label{fig_proof_ETR_same_spectrum}
\end{center}
\end{figure}

With the assumptions that $A^2 = B^2 = \one$ and $\moy{A} = \moy{B} = 0$ (hence $\Delta A = \Delta B = 1$), and that ${\cal A}$ and ${\cal B}$ have the same spectrum as $A$ and $B$ (hence ${\cal A}^2 = {\cal B}^2 = \one$), the real vectors $\hat a, \hat b, \hat x \,(=\!\vec x)$ and $\hat y \,(=\!\vec y)$ defined as in~(\ref{def_veca_vecb}--\ref{def_vecx_vecy}) are now such that (with $\epsilon_{\cal B} = \eta_{\cal B}$ in the error-disturbance scenario)
\ba
&& \hspace{-3mm} \|\hat a\| \,=\, \|\hat b\| \,=\, \|\hat x\| \,=\, \|\hat y\| \,=\, 1, \\[2mm]
&& \hspace{-3mm} a_\perp^2 =\, 1 - (\hat a \!\cdot\! \hat x)^2 = \|\hat x{-}\hat a\|^2 \Big(1\!-\!\frac{\|\hat x{-}\hat a\|^2\!}{4} \Big) = \epsilon_{\cal A}^2 \Big(1 - \frac{\epsilon_{\cal A}^2}{4} \Big) \nonumber \\ \label{epsA_a_perp} \\[-1mm]
&& \hspace{-3mm} b_\perp^2 \,=\, 1 - (\hat b \cdot \hat y)^2 \,=\, \eta_{\cal B}^2 \Big(1 - \frac{\eta_{\cal B}^2}{4} \Big), \label{etaB_b_perp} \\[1mm]
&& \hspace{-3mm} \hat a \cdot \hat b \,=\, C_{\!AB}, \qquad \hat x \cdot \hat y \,=\, 0. \label{ab_CAB_bis}
\ea

The vectors $\hat a$, $\hat b$, $\hat x$ and $\hat y$ thus satisfy the assumptions of Lemma~\ref{ref_lemma2} (see Figure~\ref{fig_proof_ETR_same_spectrum}). Together Eqs.~(\ref{epsA_a_perp}--\ref{ab_CAB_bis}), inequality~\eqref{ineq_lemma2} gives our error-disturbance relation~\eqref{ETR_same_spectrum}.

\section*{ACKNOWLEDGMENTS}

I am grateful to M.~J.~W. Hall for fruitful discussions and comments on an earlier version of this manuscript. This work was supported by a UQ Postdoctoral Research Fellowship.

\onecolumngrid
\clearpage

\renewcommand{\theequation}{S\arabic{equation}}
\renewcommand{\thefigure}{S\arabic{figure}}

\setcounter{equation}{0}
\setcounter{figure}{0}

\renewcommand\thesection{\Alph{section}} 
\renewcommand\thesubsection{\arabic{subsection}}

$ $
\bigskip

\begin{center}
{\bf \large SUPPORTING INFORMATION}
\end{center}

\bigskip

\begin{adjustwidth}{1.95cm}{1.95cm}
{\small
\hspace{2mm} This Supporting Information starts by giving more details on our general framework for approximate joint measurements (Part~\ref{secSI_approx_joint_meas}). We then prove the two geometric Lemmas used in the proofs of our error-trade-off and error-disturbance relations (Part~\ref{SecSI_proof_lemmas}), and prove the tightness of our relations (Part~\ref{secSI_proof_tightness}). We finally show explicitly how Ozawa's ``uncertainty relation'' for joint measurements follows from our new error-trade-off relation (Part~\ref{SecSI_ozawa_from_new_relation}).
}
\end{adjustwidth}

\bigskip
\bigskip

\twocolumngrid

\section{More on approximate joint measurements}
\label{secSI_approx_joint_meas}

In the main text we have presented our general framework for the approximate joint measurement of $A$ and $B$ on $\ket{\psi}$ by introducing two commuting observables ${\cal A}$ and ${\cal B}$ to be measured on the state $\ket{\psi,\xi}$. We give below an alternative description in terms of Positive Operator-Valued Measures (POVMs)~\cite{NielsenChuang_SI}, which justifies the generality of our framework; we then discuss how to optimize the quality of the approximations, for a given POVM or a given projection eigenbasis.

\subsection{General strategy for approximate joint measurements}

In quantum theory, the most general strategy to get some information on a quantum system in the state $\ket{\psi} \in {\cal H}$ is to perform a POVM; we denote such a general\footnote{Note that we do note make any restrictive assumption on the possible POVM outcomes: there can be an arbitrary number of different real values $m$, and these could even be continuous, in which case sums would need to be appropriately replaced by integrals.} POVM by $\mathbb{M} = \{{\text M}_m\}$, where its elements ${\text M}_m$ are Hermitian, positive operators acting on ${\cal H}$, with $\sum_m {\text M}_m = \one$. In order to approximate the measurements of two observables $A$ and $B$ on $\ket{\psi}$, one can then use the results of the POVM $\mathbb{M}$ in the following way: for each of its possible outcomes $m$, we define some real value $f(m)$ that aims at estimating the result of an ideal measurement of $A$, and some real value $g(m)$ that aims at estimating the result of an ideal measurement of $B$ (where without loss of generality, $f(m)$ and $g(m)$ can be assumed to be deterministic functions).

Now, from Neumark's extension theorem~\cite{peres1990_SI}, any POVM can equivalently be represented as a projective measurement on an extended Hilbert space.
For convenience, we can thus assume that $\mathbb{M}$ corresponds to the measurement of a Hermitian observable ${\cal M} = \sum_m m \, \ket{m}\!\bra{m}$ on the state $\ket{\psi,\xi} = \ket{\psi} \otimes \ket{\xi}$, where $\ket{\xi} \in {\cal K}$ is the state of an ancillary system. In this picture, the projection of $\ket{\psi,\xi}$ onto a (normalized) eigenvector $\ket{m} \in {\cal H} \otimes {\cal K}$ gives the eigenvalue $m$ of ${\cal M}$, corresponding to the outcome $m$ of the POVM $\mathbb{M}$.
The approximation strategy for $A$ and $B$ described above, using the output values $f(m)$ to approximate the measurement of $A$ and $g(m)$ for that of $B$, then actually corresponds to the (joint) measurement of the observables ${\cal A} = \sum_m f(m) \, \ket{m}\!\bra{m} = f({\cal M})$ and ${\cal B} = \sum_m g(m) \, \ket{m}\!\bra{m} = g({\cal M})$ on $\ket{\psi,\xi}$.
These two commuting observables correspond precisely to the observables ${\cal A}$ and ${\cal B}$ considered in the main text.

\subsection{Quality of the approximations}

To characterize the quality of the approximations of $A$ and $B$, we have introduced in the main text the rms errors such that
\ba
\epsilon_{\cal A}^2 &=& \sandwich{\psi,\xi}{\,({\cal A} - A \otimes \one)^2\,}{\psi,\xi} , \label{eqSI_def_epsA} \\
\epsilon_{\cal B}^2 &=& \sandwich{\psi,\xi}{\,({\cal B} - B \otimes \one)^2\,}{\psi,\xi} . \label{eqSI_def_epsB}
\ea
It is worth noting that $\epsilon_{\cal A}$ and $\epsilon_{\cal B}$ do not depend on the particular, non-unique Neumark extension (specified by the Hilbert space ${\cal K}$, $\ket{\xi} \in {\cal K}$ and the eigenvectors $\ket{m} \in {\cal H} \!\otimes\! {\cal K}$) chosen for the POVM $\mathbb{M}$. Using the definitions ${\cal A} = \sum f(m) \ket{m}\!\bra{m}$ and ${\cal B} = \!\sum g(m) \ket{m}\!\bra{m}$, and the fact that for all $m$, ${\text M}_m = \one \!\otimes\! \bra{\xi} \, . \, \ket{m}\!\bra{m} \, . \, \one \! \otimes \! \ket{\xi}$, the rms errors can indeed be directly expressed in terms of the POVM elements ${\text M}_m$ and of the functions $f(m)$ and $g(m)$ as follows~\cite{hall04_SI,ozawa04_SI}:
\ba
\epsilon_{\cal A}^2 &=& \sum_m \ \sandwich{\psi}{\big(A-f(m)\big){\text M}_m\big(A-f(m)\big)}{\psi} , \label{eqSI_epsA_POVM} \\
\epsilon_{\cal B}^2 &=& \sum_m \ \sandwich{\psi}{\big(B-g(m)\big){\text M}_m\big(B-g(m)\big)}{\psi} . \label{eqSI_epsB_POVM}
\ea

\subsection{Optimal choice for $f(m)$ and $g(m)$}

For a given POVM $\mathbb{M}$ or a given projective measurement of $\ket{\psi,\xi}$ onto an eigenbasis $\{\ket{m}\}$, one can choose the output functions $f(m)$ and $g(m)$ so as to optimize the quality of our approximations (i.e., minimize $\epsilon_{\cal A}$ and $\epsilon_{\cal B}$).

Developing Eq.~\eqref{eqSI_def_epsA} with ${\cal A} = \sum_m f(m) \, \ket{m}\!\bra{m}$, one indeed finds\footnote{In the following, the notation $\moy{...}$ is used for $\sandwich{\psi}{...}{\psi}$. The expression inside the brackets does not need to be a Hermitian observable; if it is, the notation thus denotes its mean value in the state $\ket{\psi}$.}~\cite{hall04_SI}
\ba
\epsilon_{\cal A}^2 &=& \moy{A^2} \ - \!\! \sum_{m | p(m) > 0} p(m) \, \left( \mathrm{Re} \, \frac{\sandwich{m}{A\!\otimes\!\one \,}{\psi,\xi}}{\braket{m}{\psi,\xi}} \right)^{\!2} \nonumber \\[-1mm]
 && \ + \!\!\! \sum_{m | p(m) > 0} \!\!\! p(m) \left( f(m) - \mathrm{Re} \, \frac{\sandwich{m}{A\!\otimes\!\one \,}{\psi,\xi}}{\braket{m}{\psi,\xi}} \right)^{\!2} \!\!\! , \qquad \label{eqSI_epsA2_to_optimize}
\ea
with $p(m) = |\braket{m}{\psi,\xi}|^2$ (and where the sums are over the normalized eigenvectors $\ket{m}$ of ${\cal M}$ for which $p(m) > 0$). The value $f(m)$ only contributes to the last sum in~\eqref{eqSI_epsA2_to_optimize}. It thus appears clearly that in order to minimize the rms errors $\epsilon_{\cal A}$ and $\epsilon_{\cal B}$, the optimal values for $f(m)$ and similarly for $g(m)$ are, for each possible outcome $m$,
\ba
f_{opt}(m) = \mathrm{Re} \, \frac{\sandwich{m}{A\!\otimes\!\one \,}{\psi,\xi}}{\braket{m}{\psi,\xi}}, \label{eqSI_f_opt} \\
g_{opt}(m) = \mathrm{Re} \, \frac{\sandwich{m}{B\!\otimes\!\one \,}{\psi,\xi}}{\braket{m}{\psi,\xi}}. \label{eqSI_g_opt}
\ea
Interestingly, one may recognize above that the optimal values $f_{opt}(m)$ and $g_{opt}(m)$ are given by the real parts of the so-called weak values~\cite{WeakMeas_AAV_SI} of the observables $A\!\otimes\!\one$ and $B\!\otimes\!\one$, respectively, pre-selected in the state $\ket{\psi,\xi}$ and post-selected in the state $\ket{m}$. This thus provides an interesting interpretation to the real part of weak values as the optimal approximation of an observable, when the criterion for optimality is taken to be the rms error~\cite{hall04_SI}.

For this optimal choice of output functions $f = f_{opt}$ and $g = g_{opt}$, one then gets, from~\eqref{eqSI_epsA2_to_optimize} and from the fact that $\moy{A^2} = \sum_m \sandwich{\psi,\xi}{A\!\otimes\!\one}{m} \sandwich{m}{A\!\otimes\!\one}{\psi,\xi}$, where the sum now runs over all normalized eigenstates $\ket{m}$ of ${\cal M}$:
\ba
\epsilon_{\cal A}^2(f_{opt}) &=& \sum_{m | p(m) > 0} p(m) \, \left( \mathrm{Im} \, \frac{\sandwich{m}{A\!\otimes\!\one \,}{\psi,\xi}}{\braket{m}{\psi,\xi}} \right)^2 \nonumber \\
&& \ + \sum_{m | p(m) = 0} |\sandwich{m}{A\!\otimes\!\one}{\psi,\xi}|^2 \, \!\!, \label{epsA2_optim} \\
\epsilon_{\cal B}^2(g_{opt}) &=& \sum_{m | p(m) > 0} p(m) \, \left( \mathrm{Im} \, \frac{\sandwich{m}{B\!\otimes\!\one \,}{\psi,\xi}}{\braket{m}{\psi,\xi}} \right)^2 \nonumber \\
&& \ + \sum_{m | p(m) = 0} |\sandwich{m}{B\!\otimes\!\one}{\psi,\xi}|^2 \, \!\!, \label{epsB2_optim}
\ea
which now involve the imaginary parts of the weak values~\cite{Johansen2004_SI}, together with some terms related to eigenvalues $m$ for which $p(m) = 0$.

\medskip

Note again that $f_{opt}(m)$ and $g_{opt}(m)$ can be expressed in terms of the POVM elements ${\text M}_m$ directly. Namely:
\ba
f_{opt}(m) = \mathrm{Re} \frac{\moy{{\text M}_m A}}{\moy{{\text M}_m}}, \quad
g_{opt}(m) = \mathrm{Re} \frac{\moy{{\text M}_m B}}{\moy{{\text M}_m}}, \quad \label{eqSI_fgopt_POVM}
\ea
where $\moy{{\text M}_m A}/\moy{{\text M}_m}$ and $\moy{{\text M}_m B}/\moy{{\text M}_m}$ are the standard generalisations of weak values for POVMs~\cite{wiseman_general_weak_values_SI}. However, in general one has (with now $p(m) = \moy{{\text M}_m}$):
\ba
\epsilon_{\cal A}^2(f_{opt}) &=& \moy{A^2} \ - \! \sum_{m | p(m) > 0} p(m) \, [ \mathrm{Re} \, \moy{{\text M}_m A}/\moy{{\text M}_m} ]^2 \nonumber \\
=&& \hspace{-4mm} \! \sum_{m | p(m) > 0} \!\!\!\! p(m) \! \left( \mathrm{Im} \, \frac{\moy{{\text M}_m A}}{\moy{{\text M}_m}} \right)^2 + \!\!\! \sum_{m | p(m) = 0} \!\!\!\! \moy{A M_m A} \hspace{-2mm} \nonumber \\
&& \quad + \sum_{m | p(m) > 0} \!\! \frac{\moy{M_m}\moy{A M_m A} - |\moy{M_m A}|^2}{p(m)} \nonumber \\
\geq&& \hspace{-4mm} \! \sum_{m | p(m) > 0} \!\!\!\! p(m) \! \left( \mathrm{Im} \, \frac{\moy{{\text M}_m A}}{\moy{{\text M}_m}} \right)^2 + \!\!\! \sum_{m | p(m) = 0} \!\!\!\! \moy{A M_m A}, \hspace{-2mm} \nonumber \\[-1mm] \\[4mm]
\epsilon_{\cal B}^2(f_{opt}) &=& \moy{B^2} \ - \! \sum_{m | p(m) > 0} p(m) \, [ \mathrm{Re} \, \moy{{\text M}_m B}/\moy{{\text M}_m} ]^2 \nonumber \\
\geq&& \hspace{-4mm} \! \sum_{m | p(m) > 0} \!\!\!\! p(m) \! \left( \mathrm{Im} \, \frac{\moy{{\text M}_m B}}{\moy{{\text M}_m}} \right)^2 + \!\!\! \sum_{m | p(m) = 0} \!\!\!\! \moy{B M_m B}, \hspace{-2mm} \nonumber \\[-1mm]
\ea
where the inequalities follow from the Cauchy-Schwarz inequality applied e.g. to $M_m^{1/2} \ket{\psi}$ and $M_m^{1/2} A \ket{\psi}$.

\medskip

Finally, one can easily check that the optimal values $f_{opt}(m)$ and $g_{opt}(m)$ derived above are such that the approximations of $A$ and $B$ are ``unbiased'', in the sense that $\moy{{\cal A}} = \moy{A}$ and $\moy{{\cal B}} = \moy{B}$.
Let us emphasize however that $f_{opt}(m)$ and $g_{opt}(m)$ may not be in the spectrum of $A$ and $B$. While for general joint measurements we indeed do not require the approximation functions to output eigenvalues of $A$ and $B$, we nevertheless impose such a constraint in the error-disturbance scenario (see main text). In that case, it may not be possible for $f(m)$ and $g(m)$ to take the values $f_{opt}(m)$ and $g_{opt}(m)$ given by Eqs.~(\ref{eqSI_f_opt}--\ref{eqSI_g_opt}) or~\eqref{eqSI_fgopt_POVM}; see Part~\ref{secSI_proof_tightness2_ETR_same_spectrum} below for an example.

\section{Proofs of our geometric Lemmas}
\label{SecSI_proof_lemmas}

We now prove the geometric inequalities of the Methods section, which hold for any two unit vectors $\hat a, \hat b$ of a Euclidean space ${\cal E}$ (with $\hat a \cdot \hat b = \chi$). Let us start with Lemma~\ref{ref_lemma2}:

\setcounter{thm}{1}
\begin{lem}
For any two \emph{orthogonal unit vectors} $\hat x$ and $\hat y$ of ${\cal E}$, defining $a_\perp = \sqrt{1-(\hat a \cdot \hat x)^2}$ and $b_\perp = \sqrt{1-(\hat b \cdot \hat y)^2}$, one has
\ba
a_\perp^2 + b_\perp^2 + 2 \sqrt{1 - \chi^2} \ a_\perp \, b_\perp \ \geq \ \chi^2. \label{ineq_lemma2_SI}
\ea
\end{lem}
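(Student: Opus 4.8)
The plan is to show that, after a trigonometric change of variables, inequality~\eqref{ineq_lemma2_SI} is nothing but the triangle inequality for the natural distance between lines through the origin in ${\cal E}$.

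First I would set $\alpha = \arccos|\hat a\cdot\hat x|$, $\beta = \arccos|\hat b\cdot\hat y|$ and $\gamma = \arccos|\hat a\cdot\hat b|$, all lying in $[0,\frac{\pi}{2}]$, so that $a_\perp = \sin\alpha$, $b_\perp = \sin\beta$, $\sqrt{1-\chi^2} = \sin\gamma$ and $\chi^2 = 1-\sin^2\gamma$. Then~\eqref{ineq_lemma2_SI} becomes $\sin^2\alpha + \sin^2\beta + \sin^2\gamma + 2\sin\alpha\sin\beta\sin\gamma \geq 1$. Viewing the left-hand side minus $1$ as a quadratic in $\sin\gamma$ and completing the square (using $(1-\sin^2\alpha)(1-\sin^2\beta) = \cos^2\alpha\cos^2\beta$) gives the factorisation $\big(\sin\gamma - \cos(\alpha+\beta)\big)\big(\sin\gamma + \cos(\alpha-\beta)\big)$. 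Since $\gamma\in[0,\frac{\pi}{2}]$ and $\alpha-\beta\in[-\frac{\pi}{2},\frac{\pi}{2}]$ the second factor is $\geq 0$, so the inequality is equivalent to $\sin\gamma \geq \cos(\alpha+\beta)$; and since $\cos(\alpha+\beta) = \sin(\frac{\pi}{2}-\alpha-\beta)$ and $\sin$ is increasing on $[0,\frac{\pi}{2}]$ (the case $\alpha+\beta\geq\frac{\pi}{2}$ being trivial), this in turn is equivalent to the clean condition
\[ \alpha + \beta + \gamma \ \geq\ \frac{\pi}{2}. \]

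To establish that last inequality I would introduce $d(\hat u,\hat v) := \arccos|\hat u\cdot\hat v|\in[0,\frac{\pi}{2}]$, the acute angle between the lines $\mathbb{R}\hat u$ and $\mathbb{R}\hat v$, and use that it satisfies the triangle inequality. (Directly: given unit vectors $\hat u,\hat v,\hat w$, flip the signs of $\hat u$ and $\hat w$ so that $\hat u\cdot\hat v\geq 0$ and $\hat v\cdot\hat w\geq 0$; then $d(\hat u,\hat v)$ and $d(\hat v,\hat w)$ equal the ordinary angles $\angle(\hat u,\hat v),\angle(\hat v,\hat w)$, while $d(\hat u,\hat w) = \arccos|\hat u\cdot\hat w| \leq \arccos(\hat u\cdot\hat w) = \angle(\hat u,\hat w) \leq \angle(\hat u,\hat v)+\angle(\hat v,\hat w)$, the last step being the usual triangle inequality for angles between unit vectors, itself a consequence of Cauchy--Schwarz.) Now $\alpha = d(\hat a,\hat x)$, $\beta = d(\hat b,\hat y)$, $\gamma = d(\hat a,\hat b)$ and $d(\hat x,\hat y) = \arccos|\hat x\cdot\hat y| = \arccos 0 = \frac{\pi}{2}$, so two applications of the triangle inequality give $\frac{\pi}{2} = d(\hat x,\hat y) \leq d(\hat x,\hat a) + d(\hat a,\hat b) + d(\hat b,\hat y) = \alpha+\gamma+\beta$, which is exactly what is needed.

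I expect the crux to be spotting the substitution and reduction of the second paragraph: once recast in terms of $\alpha,\beta,\gamma$, the opaque quadratic-form inequality collapses to the transparent statement $\alpha+\beta+\gamma\geq\frac{\pi}{2}$, after which the factorisation and the triangle inequality for lines are routine. The only care needed is with degenerate configurations ($a_\perp$, $b_\perp$ or $\chi^2$ equal to $0$ or $1$, or the second factor vanishing), but these all fall under the equality case $\alpha+\beta+\gamma = \frac{\pi}{2}$ and cause no difficulty.
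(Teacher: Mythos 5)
Your proof is correct. The substitution $a_\perp=\sin\alpha$, $b_\perp=\sin\beta$, $\sqrt{1-\chi^2}=\sin\gamma$ is legitimate, the factorisation $\bigl(\sin\gamma-\cos(\alpha+\beta)\bigr)\bigl(\sin\gamma+\cos(\alpha-\beta)\bigr)$ checks out, the second factor is indeed nonnegative for $\alpha,\beta,\gamma\in[0,\tfrac{\pi}{2}]$, and the reduction to $\alpha+\beta+\gamma\geq\tfrac{\pi}{2}$ followed by two applications of the triangle inequality for the projective metric $d(\hat u,\hat v)=\arccos|\hat u\cdot\hat v|$ (with $d(\hat x,\hat y)=\tfrac{\pi}{2}$) is sound; the sign-flipping argument correctly reduces that triangle inequality to the standard one for angles between unit vectors.

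Your route differs from the paper's in presentation, though the two are close in substance. The paper decomposes $\hat b$ and $\hat x$ along $\hat a$ and its orthogonal complement, invokes Bessel's inequality $(\hat b\cdot\hat x)^2+(\hat b\cdot\hat y)^2\leq 1$ to get $b_\perp\geq|\hat b\cdot\hat x|$, and then bounds $|\hat b\cdot\hat x|\geq|\chi a_x|-\sqrt{1-\chi^2}\,a_\perp$ before squaring and rearranging. In your variables that chain reads $\sin\beta\geq|\cos\gamma\cos\alpha|-\sin\gamma\sin\alpha\geq\cos(\alpha+\gamma)$, which is exactly your condition $\alpha+\beta+\gamma\geq\tfrac{\pi}{2}$ in disguise: both proofs ultimately rest on the spherical/projective triangle inequality, but you make the metric statement explicit while the paper keeps it buried in an algebraic estimate. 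What your version buys is transparency --- the inequality is revealed as a statement about three lines whose pairwise angular distances must bridge the right angle between $\hat x$ and $\hat y$ --- and it delivers the equality case for free (saturation iff $\alpha+\beta+\gamma=\tfrac{\pi}{2}$, i.e.\ iff the triangle inequality is tight, which forces the four vectors to be coplanar, in agreement with the remark at the end of the paper's proof). What the paper's version buys is self-containedness and a form that transfers directly to the proof of Lemma~1, where $\vec x$ and $\vec y$ are no longer unit vectors.
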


\medskip

\begin{proof}[Proof of Lemma~\ref{ref_lemma2}]

We use for convenience the notation $a_x = \hat a \cdot \hat x$.
Let us define $\hat a_\perp^{\, (b)} = \frac{\hat b - \chi \hat a}{\sqrt{1-\chi^2}}$ if $|\chi| \neq 1$, $\hat a_\perp^{\, (b)} = \hat 0$ otherwise, and $\hat a_\perp^{\, (x)} = \frac{\hat x - a_x \hat a}{a_\perp}$ if $a_\perp \neq 0$, $\hat a_\perp^{\, (x)} = \hat 0$ otherwise, so that $\hat b = \chi \hat a + \sqrt{1-\chi^2} \hat a_\perp^{\, (b)}$ and $\hat x = a_x \hat a + a_\perp \hat a_\perp^{\, (x)}$, with $\|\hat a_\perp^{\, (b)}\| = 1$ or 0, $\|\hat a_\perp^{\, (x)}\| = 1$ or 0, and $\hat a \cdot \hat a_\perp^{\, (b)} = \hat a \cdot \hat a_\perp^{\, (x)} = 0$.

Since $\hat x$ and $\hat y$ are orthogonal unit vectors, we have $(\hat b \cdot \hat x)^2 + (\hat b \cdot \hat y)^2 \leq \|\hat b\|^2 = 1$, and hence
\ba
b_\perp &=& \sqrt{1-(\hat b \cdot \hat y)^2} \nonumber \\
& \geq & |\hat b \cdot \hat x| \ = \ \left| \big(\chi \hat a + \sqrt{1-\chi^2} \hat a_\perp^{\, (b)} \big) \cdot \big(a_x \hat a + a_\perp \hat a_\perp^{\, (x)} \big) \right| \nonumber \\
& & \phantom{|\hat b \cdot \hat x| \ } = \ \left| \chi a_x + \sqrt{1-\chi^2} \ a_\perp \ \hat a_\perp^{\, (b)} \! \cdot \hat a_\perp^{\, (x)} \right| \nonumber \\
& & \phantom{|\hat b \cdot \hat x| \ } \geq \ | \chi a_x | - \sqrt{1-\chi^2} \ a_\perp \, . \label{eq_proof_lemma1}
\ea
It follows that
\ba
\left( b_\perp + \sqrt{1-\chi^2} \ a_\perp \right)^2 \ \geq \ \left( \chi a_x \right)^2 \ = \ \chi^2 \, (1 - a_\perp^2) , \quad
\ea
which is equivalent to~\eqref{ineq_lemma2_SI}.

Note that this inequality can only be saturated if $\hat a, \hat b, \hat x$ and $\hat y$ are coplanar: indeed, the first inequality in~\eqref{eq_proof_lemma1} would need to be an equality (which requires $\hat b \in {\mathrm{Span}}\{\hat x, \hat y\}$), and a similar constraint would apply to $\hat a$.
\end{proof}

\medskip

We can now use the result of Lemma~\ref{ref_lemma2} to prove Lemma~\ref{ref_lemma1}:

\setcounter{thm}{0}
\begin{lem}
For any two \emph{orthogonal vectors} $\vec x$ and $\vec y$ of ${\cal E}$, one has
\ba
\| \hat a - \vec x \|^2 + \| \hat b - \vec y \|^2 + 2 \sqrt{1 - \chi^2} \ \| \hat a - \vec x \| \, \| \hat b - \vec y \| \ \geq \ \chi^2 \, . \hspace{-5mm} \nonumber \\ \label{ineq_lemma1_SI}
\ea
\end{lem}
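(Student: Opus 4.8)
The plan is to \emph{deduce Lemma~\ref{ref_lemma1} from Lemma~\ref{ref_lemma2}}, which has already been proven: the point is that an arbitrary vector $\vec x$ orthogonal to $\vec y$ can be replaced, at no cost to the inequality, by the \emph{unit} vector $\hat x$ pointing in its direction, since the distance from the unit vector $\hat a$ to any point of the line $\mathbb{R}\hat x$ is never smaller than the distance from $\hat a$ to that line.

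First I would dispose of the degenerate cases. If $\vec x = \vec 0$ then $\|\hat a - \vec x\| = \|\hat a\| = 1$, so the left-hand side of~\eqref{ineq_lemma1_SI} is already at least $1 \geq \chi^2$ (because $|\chi| = |\hat a \cdot \hat b| \leq 1$); symmetrically if $\vec y = \vec 0$. So assume $\vec x, \vec y \neq \vec 0$ and set $\hat x = \vec x/\|\vec x\|$, $\hat y = \vec y/\|\vec y\|$; since $\vec x \perp \vec y$ these are orthogonal unit vectors, hence admissible in Lemma~\ref{ref_lemma2}. The key estimate is then $\|\hat a - \vec x\| \geq a_\perp := \sqrt{1-(\hat a\cdot\hat x)^2}$ and $\|\hat b - \vec y\| \geq b_\perp := \sqrt{1-(\hat b\cdot\hat y)^2}$: writing $s = \|\vec x\| \geq 0$ one has $\|\hat a - \vec x\|^2 = 1 - 2s(\hat a\cdot\hat x) + s^2$, whose minimum over $s \geq 0$ is $1 - (\hat a\cdot\hat x)^2 = a_\perp^2$ if $\hat a\cdot\hat x \geq 0$ and $1 \geq a_\perp^2$ otherwise; in all cases $\|\hat a - \vec x\|^2 \geq a_\perp^2$, and likewise $\|\hat b - \vec y\|^2 \geq b_\perp^2$.

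To finish, I would use that the map $(u,v)\mapsto u^2 + v^2 + 2\sqrt{1-\chi^2}\,uv$ is nondecreasing in each of $u,v\geq 0$ (all coefficients being nonnegative), so that with $u = \|\hat a - \vec x\| \geq a_\perp \geq 0$ and $v = \|\hat b - \vec y\| \geq b_\perp \geq 0$, followed by Lemma~\ref{ref_lemma2} applied to $\hat a, \hat b, \hat x, \hat y$,
\[
\|\hat a - \vec x\|^2 + \|\hat b - \vec y\|^2 + 2\sqrt{1-\chi^2}\,\|\hat a - \vec x\|\,\|\hat b - \vec y\| \ \geq\ a_\perp^2 + b_\perp^2 + 2\sqrt{1-\chi^2}\,a_\perp b_\perp \ \geq\ \chi^2 ,
\]
which is exactly~\eqref{ineq_lemma1_SI}. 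There is no serious obstacle here, as Lemma~\ref{ref_lemma2} carries all the geometric content; the only points that need a moment's care are the degenerate $\vec x = \vec 0$ (resp.\ $\vec y = \vec 0$) cases and the verification that passing from $\vec x,\vec y$ to their normalizations $\hat x,\hat y$ cannot increase the left-hand side.
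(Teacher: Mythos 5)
Your proof is correct and follows essentially the same route as the paper's: reduce to Lemma~\ref{ref_lemma2} by normalizing $\vec x,\vec y$ and showing $\|\hat a-\vec x\|\geq a_\perp$, $\|\hat b-\vec y\|\geq b_\perp$ (the paper does this via the Pythagorean decomposition $\hat a-\vec x=(\hat a-a_x\hat x)+(a_x\hat x-\vec x)$, you by minimizing over the scalar multiple, which is the same computation), then use monotonicity of the quadratic form. The only cosmetic difference is in the degenerate cases, which the paper absorbs by choosing $\hat x$ (resp.\ $\hat y$) as an arbitrary unit vector orthogonal to the other, while you bound the left-hand side directly by $1\geq\chi^2$; both are fine.
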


\begin{proof}[Proof of Lemma~\ref{ref_lemma1}]

Let us define $\hat x = \frac{\vec x}{\| \vec x \|}$ (if $\| \vec x \| \neq 0$; otherwise, $\hat x$ is defined as any unit vector orthogonal to $\vec y$), $\hat y = \frac{\vec y}{\| \vec y \|}$ (if $\| \vec y \| \neq 0$; otherwise, $\hat y$ is defined as any unit vector orthogonal to $\hat x$), and let us use the notations of Lemma~\ref{ref_lemma2}: $a_x = \hat a \cdot \hat x$, $a_\perp = \sqrt{1-a_x^2}$, and $b_\perp = \sqrt{1-(\hat b \cdot \hat y)^2}$.
One has
\ba
\|\hat a - \vec x\|^2 &=& \left\| \big( \hat a - a_x \hat x \big) + \big( a_x \hat x - \vec x \big) \right\|^2 \nonumber \\
&=& \big\| \hat a - a_x \hat x \big\|^2 + \big\| a_x \hat x - \vec x \big\|^2 \nonumber \\
& \geq & \big\| \hat a - a_x \hat x \big\|^2 = 1 - a_x^2 = a_\perp^2 , \label{eq_proof_lemma1_SM}
\ea
and similarly, $\|\hat b - \vec y\|^2 \geq b_\perp^2$. Therefore,
\ba
\| \hat a - \vec x \|^2 + \| \hat b - \vec y \|^2 + 2 \sqrt{1 - \chi^2} \ \| \hat a - \vec x \| \, \| \hat b - \vec y \| \nonumber \\
 \ \geq \ a_\perp^2 + b_\perp^2 + 2 \sqrt{1-\chi^2} a_\perp b_\perp \ \geq \ \chi^2, \quad
\ea
where the last inequality is due to Lemma~\ref{ref_lemma2}.

Note here that inequality~\eqref{ineq_lemma1_SI} can only be saturated if $\hat a, \hat b, \vec x$ and $\vec y$ are coplanar and if $\vec x$ is the orthogonal projection of $\hat a$ onto the direction $\hat x$ (so that the inequality in~\eqref{eq_proof_lemma1_SM} is saturated), and $\vec y$ is the orthogonal projection of $\hat b$ onto the direction $\hat y$.
\end{proof}

\medskip

For completeness, let us also introduce the following additional Lemma:

\setcounter{thm}{2}

\begin{lem}\label{ref_lemma3}

For any two \emph{orthogonal vectors} $\vec x$ and $\hat y$ of ${\cal E}$, such that $||\hat y|| = 1$, one has (with $b_\perp = \sqrt{1-(\hat b \cdot \hat y)^2}$):
\ba
\| \hat a - \vec x \|^2 + b_\perp^2 + 2 \sqrt{1 - \chi^2} \ \| \hat a - \vec x \| \, b_\perp \ \geq \ \chi^2. \label{ineq_lemma3_SI}
\ea

\end{lem}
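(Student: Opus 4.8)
The plan is to interpolate between the two Lemmas already proved: Lemma~\ref{ref_lemma3} is exactly the "mixed" case where one of the two vectors ($\vec x$) is left arbitrary — as in Lemma~\ref{ref_lemma1} — while the other ($\hat y$) is constrained to be a unit vector — as in Lemma~\ref{ref_lemma2}. So I expect the proof to be a hybrid of the two earlier arguments, applying the reduction step of Lemma~\ref{ref_lemma1} only to the first vector and then invoking Lemma~\ref{ref_lemma2} directly.

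Concretely, first I would set $\hat x = \vec x / \|\vec x\|$ if $\|\vec x\| \neq 0$, and otherwise let $\hat x$ be any unit vector orthogonal to $\hat y$; in either case $\hat x$ and $\hat y$ are orthogonal unit vectors. Writing $a_x = \hat a \cdot \hat x$ and $a_\perp = \sqrt{1 - a_x^2}$, the orthogonal-decomposition estimate from the proof of Lemma~\ref{ref_lemma1},
\be
\|\hat a - \vec x\|^2 = \|\hat a - a_x \hat x\|^2 + \|a_x \hat x - \vec x\|^2 \ \geq \ a_\perp^2,
\ee
shows that $\|\hat a - \vec x\| \geq a_\perp$. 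Then, since $\hat x, \hat y$ are orthogonal unit vectors, Lemma~\ref{ref_lemma2} applies (with $b_\perp = \sqrt{1 - (\hat b \cdot \hat y)^2}$ unchanged) and gives $a_\perp^2 + b_\perp^2 + 2\sqrt{1-\chi^2}\, a_\perp b_\perp \geq \chi^2$. Substituting $\|\hat a - \vec x\| \geq a_\perp \geq 0$ into the left-hand side of~\eqref{ineq_lemma3_SI} — all coefficients being nonnegative, the expression is monotone increasing in $\|\hat a - \vec x\|$ — yields
\ba
\| \hat a - \vec x \|^2 + b_\perp^2 + 2 \sqrt{1 - \chi^2} \ \| \hat a - \vec x \| \, b_\perp \nonumber \\
\ \geq \ a_\perp^2 + b_\perp^2 + 2\sqrt{1-\chi^2}\, a_\perp b_\perp \ \geq \ \chi^2,
\ea
which is the claimed inequality.

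The argument is essentially routine given the two earlier Lemmas, so I do not anticipate a serious obstacle; the only point requiring a little care is the degenerate case $\|\vec x\| = 0$, where $\hat x$ must be chosen by hand (orthogonal to $\hat y$) so that Lemma~\ref{ref_lemma2}'s hypotheses still hold — but that is handled exactly as in the proof of Lemma~\ref{ref_lemma1}. One could also record, as in the earlier proofs, that equality in~\eqref{ineq_lemma3_SI} forces $\hat a, \hat b, \vec x, \hat y$ to be coplanar and $\vec x$ to be the orthogonal projection of $\hat a$ onto $\mathrm{Span}\{\hat x\}$, since both inequalities used must then be saturated.
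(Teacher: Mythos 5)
Your proposal is correct and matches the paper's approach: the paper simply remarks that the proof ``follows closely that of Lemma~\ref{ref_lemma1}'', i.e.\ one applies the orthogonal-decomposition bound $\|\hat a - \vec x\|^2 \geq a_\perp^2$ to the first vector only and then invokes Lemma~\ref{ref_lemma2}, exactly as you do. Your handling of the degenerate case $\|\vec x\|=0$ and the remark on the saturation condition also agree with the paper.
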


\noindent The proof of this Lemma follows closely that of Lemma~\ref{ref_lemma1} above. Inequality~\eqref{ineq_lemma3_SI} can only be saturated if $\hat a, \hat b, \vec x$ and $\hat y$ are coplanar and if $\vec x$ is the orthogonal projection of $\hat a$ onto the direction $\hat x$.

\medskip

Lemmas~\ref{ref_lemma1} and~\ref{ref_lemma2} were used to prove our error-trade-off and error-disturbance relations~(\ref{ETR}--\ref{ETR_dimensionless}) and~(\ref{ETR_same_spectrum}), respectively (see the Methods section).
Lemma~\ref{ref_lemma3} can similarly be used to prove an error-disturbance relation when one does not assume that ${\cal A}$ must have the same spectrum as $A$. Namely, with the additional assumptions that $B^2 = {\cal B}^2 = \one$ and $\moy{B} = 0$, one has
\ba
\hspace{-2mm} \tilde\epsilon_{\cal A}^2 + \eta_{\cal B}^2 \Big( 1-\frac{\eta_{\cal B}^2}{4} \Big) + 2 \sqrt{1 - \tilde C_{\!AB}^2} \ \, \tilde\epsilon_{\cal A} \ \eta_{\cal B} \sqrt{1-\frac{\eta_{\cal B}^2}{4}} \ \geq \ \tilde C_{\!AB}^2 . \hspace{-8mm} \nonumber \\ \label{ETR_same_spectrum_B_only}
\ea
Figure~\ref{fig_comparison_3_ETRs} shows a comparison between the constraints imposed by each of our error-trade-off and error-disturbance relations~(\ref{ETR}--\ref{ETR_dimensionless}), \eqref{ETR_same_spectrum} and~(\ref{ETR_same_spectrum_B_only}).

\begin{figure}
\begin{center}
\epsfxsize=7.5cm
\epsfbox{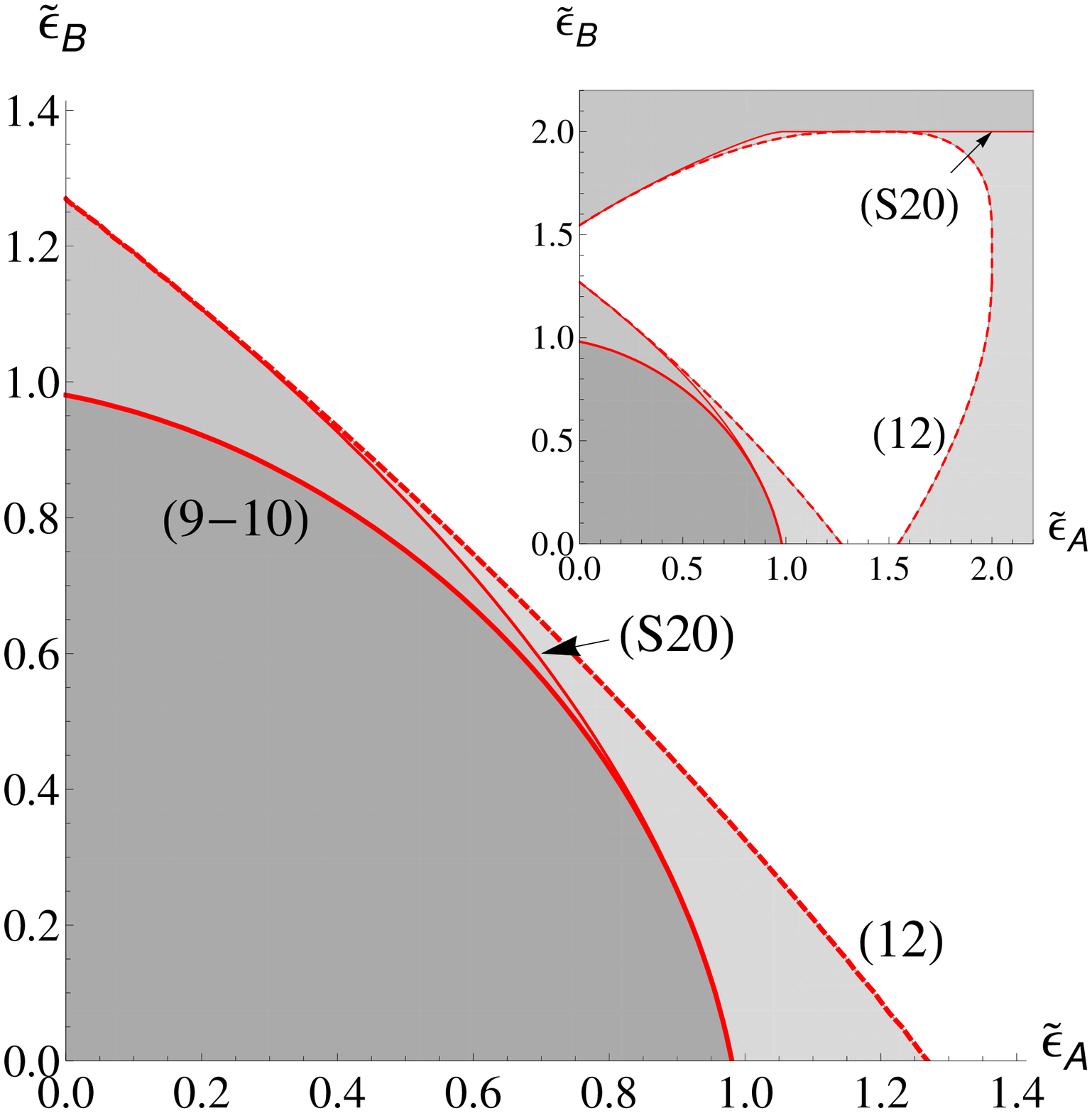}
\caption{{\bf Constraints imposed by our error-trade-off and error-disturbance relations~(\ref{ETR}--\ref{ETR_dimensionless}), (\ref{ETR_same_spectrum}) and~(\ref{ETR_same_spectrum_B_only})} (shown here for $\tilde C_{\!AB} = \sin \frac{7\pi}{16}$). Main figure: the shaded areas below the curves are inaccessible to the values of $(\tilde\epsilon_{\cal A}, \tilde\epsilon_{\cal B})$, depending on the assumptions under which the relations are derived (see text). Inset: in the case where $A^2 = B^2 = {\cal A}^2 = {\cal B}^2 = \one$ and $\moy{A} = \moy{B} = 0$, inequality~\eqref{ETR_same_spectrum} also bounds the values of $(\tilde\epsilon_{\cal A}, \tilde\epsilon_{\cal B}) = (\epsilon_{\cal A}, \eta_{\cal B})$ from above; similarly, inequality~\eqref{ETR_same_spectrum_B_only} also upper-bounds the values of $\tilde\epsilon_{\cal B} = \eta_{\cal B}$.}
\label{fig_comparison_3_ETRs}
\end{center}
\end{figure}

Let us finally mention that the necessary condition for the inequalities~\eqref{ineq_lemma2_SI}, \eqref{ineq_lemma1_SI} and~\eqref{ineq_lemma3_SI} in the three lemmas to be saturated may be helpful to inspire the choice of approximate measurements ${\cal A}$ and ${\cal B}$ if one wants to saturate our relations~(\ref{ETR}--\ref{ETR_dimensionless}), \eqref{ETR_same_spectrum} and~(\ref{ETR_same_spectrum_B_only}). ${\cal A}$ and ${\cal B}$ should indeed be chosen such that the real vectors $\vec x$ (or $\hat x$) and $\vec y$ (or $\hat y$) defined in Eq.~\eqref{def_vecx_vecy} of the Methods section are coplanar with $\hat a$ and $\hat b$, defined in Eq.~\eqref{def_veca_vecb} (and which are fixed by $A, B$ and $\ket{\psi}$). Furthermore, in order to saturate~(\ref{ETR}--\ref{ETR_dimensionless}) for instance, $\vec x$ and $\vec y$ should precisely be the orthogonal projections of $\hat a$ and $\hat b$; this is indeed ensured when one uses the optimal output functions $f_{opt}(m)$ and $g_{opt}(m)$ of Eqs.~(\ref{eqSI_f_opt}--\ref{eqSI_g_opt}).

\section{Tightness of our error-trade-off relations}
\label{secSI_proof_tightness}

We now show that our general error-trade-off relation~(\ref{ETR}--\ref{ETR_dimensionless}) is tight, and that so is our relation~\eqref{ETR_same_spectrum} when $A, B$ and $\ket{\psi}$ are---in addition to the assumptions $A^2 = B^2 = {\cal A}^2 = {\cal B}^2 = \one$ and $\moy{A} = \moy{B} = 0$---such that $|\moy{AB}| = 1$. 

In order to do so, we provide in each case explicit examples of approximate joint measurements ${\cal A}, {\cal B}$ (specified by their common eigenbases $\{\ket{m}\}$) saturating the bounds. These generalize the joint measurement strategies for qubits given in the main text. Note however that such optimal approximation strategies are not unique.

\subsection{Tightness of our error-trade-off relation~(\ref{ETR}--\ref{ETR_dimensionless})}

Interestingly, one does not need to use any ancillary system to saturate inequality~(\ref{ETR}--\ref{ETR_dimensionless}). As we show below, in the non-trivial case where $\Delta A \, \Delta B > 0$, this can indeed be done for instance by performing on $\ket{\psi}$ a projective measurement ${\cal M}$ with eigenvectors in ${\mathrm{Span}}\{\ket{\psi}, A\ket{\psi}, B\ket{\psi}\}$ (and where the remaining eigenvectors are orthogonal to $\ket{\psi}$, $A\ket{\psi}$ and $B\ket{\psi}$).

We define for convenience the observables $A_0 = [A-\moy{A}]/\Delta A$ and $B_0 = [B-\moy{B}]/\Delta B$ (such that $\moy{A_0} = \moy{B_0} = 0$ and $\moy{A_0^2} = \moy{B_0^2} = 1$). Note that from the Cauchy-Schwarz inequality applied to $A_0\ket{\psi}$ and $B_0\ket{\psi}$, one has $|\moy{A_0 B_0}| \leq 1$---which corresponds to the Schr\"odinger uncertainty relation~\cite{schrodinger_uncertainty_relation_SI}, and is a stronger version than the better-known Robertson uncertainty relation~\eqref{robertson_UR}.

\medskip

Let us consider first the case where $|\moy{A_0 B_0}| = 1$---in which the vectors $A_0\ket{\psi}$ and $B_0\ket{\psi}$ are linearly dependent---and define $\phi = \arg \, \moy{A_0B_0}$, $\phi_a = -\frac{\phi}{2}$ and $\phi_b = \frac{\phi}{2}$, such that $\moy{A_0B_0} = e^{i\phi}$ and $C_{\!AB} = \Delta A \, \Delta B \, \mathrm{Im} \moy{A_0B_0} = \Delta A \, \Delta B \sin \phi$. For an angle $\varphi$ and a real parameter $q \neq 0$, we define
\ba
\ket{\overline{m_1}} &=& (\one + q \, e^{i(\varphi - \phi_a)} A_0)\ket{\psi}, \label{def_m1_case1_saturate} \\
\ket{\overline{m_2}} &=& (\one - q^{-1} \, e^{-i(\phi_b - \varphi)} B_0)\ket{\psi}. \label{def_m2_case1_saturate}
\ea
Note that $\braket{\overline{m_1}}{\overline{m_2}} = 1 - e^{-i\phi} \moy{A_0B_0} = 0$. One can then define unit vectors $\ket{m_1}$ and $\ket{m_2}$ by normalizing the vectors $\ket{\overline{m_1}}$ and $\ket{\overline{m_2}}$, and complete the orthonormal basis $\{\ket{m_1}, \ket{m_2}\}$ of ${\mathrm{Span}}\{\ket{\psi}, A\ket{\psi}\} = {\mathrm{Span}}\{\ket{\psi}, B\ket{\psi}\}$ with some additional vectors $\ket{m}$ in ${\cal H}$, all orthogonal to $\ket{\psi}$, $A\ket{\psi}$ and $B\ket{\psi}$, so as to obtain a full orthogonal basis for ${\cal H}$; we then define the observable ${\cal M}$ to be a projective measurement onto this basis.

For a given outcome corresponding to an eigenvector $\ket{m}$, let us use the optimal approximations for the measurements of $A$ and $B$, i.e., let us output the values $f_{opt}(m)$ and $g_{opt}(m)$ given in~(\ref{eqSI_f_opt}--\ref{eqSI_g_opt}). According to Eqs.~(\ref{epsA2_optim}--\ref{epsB2_optim}), this leads to
\ba
\epsilon_{\cal A}^2 &=& \sum_{m = m_1, m_2} \, |\braket{m}{\psi}|^2 \left( \mathrm{Im} \, \frac{\sandwich{m}{A}{\psi}}{\braket{m}{\psi}} \right)^2 \nonumber \\
&=& \Delta A^2 \, \sum_{j = 1, 2} \, \frac{|\braket{\overline{m_j}}{\psi}|^2}{\braket{\overline{m_j}}{\overline{m_j}}} \left( \mathrm{Im} \, \frac{\sandwich{\overline{m_j}}{A_0}{\psi}}{\braket{\overline{m_j}}{\psi}} \right)^2 \nonumber \\
&=& \Delta A^2 \, \frac{1}{1+q^2} \left( \mathrm{Im} \, q \, e^{-i(\varphi - \phi_a)} \right)^2 \nonumber \\
&& + \, \Delta A^2 \, \frac{1}{1+q^{-2}} \left( \mathrm{Im} \, q^{-1} \, e^{-i(\varphi - \phi_a)} \right)^2 \nonumber \\[1mm]
&=& \Delta A^2 \, \sin^2 \!\left( \varphi - \phi_a \right) . \label{calculation_epsA}
\ea
$\epsilon_{\cal B}$ can be calculated in a similar way. One thus finds
\ba
\epsilon_{\cal A} = \Delta A \left| \sin \!\left( \varphi - \phi_a \right) \right|, \ \ 
\epsilon_{\cal B} = \Delta B \left| \sin \!\left( \phi_b - \varphi \right) \right|, \ \ 
\ea
which are independent of $q$.
By varying $\varphi$, chosen such that $\cos \phi \sin(\varphi{-}\phi_a) \sin(\phi_b{-}\varphi) \geq 0$, one obtains all values of $(\epsilon_{\cal A}, \epsilon_{\cal B})$ saturating the bound~(\ref{ETR}--\ref{ETR_dimensionless}).

\medskip

Consider now the case where $|\moy{A_0 B_0}| < 1$, and define $\chi = \frac{C_{\!AB}}{\Delta A \, \Delta B} = \mathrm{Im} \moy{A_0B_0}$, $\phi' = \arcsin \chi$, $\phi_a' = -\frac{\phi'}{2}$ and $\phi_b' = \frac{\phi'}{2}$. For an angle $\varphi \in [-\frac{|\phi'|}{2},\frac{|\phi'|}{2}]$ and two real parameters $q,r$, let us define the two complex coefficients
\ba
\alpha &=& q \, \cos (\varphi-\phi_a') + i \, r \, \sin (\varphi-\phi_a'), \\
\beta &=& r \, \cos (\phi_b'-\varphi) - i \, q \, \sin (\phi_b'-\varphi),
\ea
and the vector
\ba
\ket{\overline{m_1}} &=& (\one + \beta \, A_0 + \alpha \, B_0)\ket{\psi}.
\ea
Let us fix two more real parameters $s,t$ such that $s \, t = \braket{\overline{m_1}}{\overline{m_1}}/(1-|\moy{A_0 B_0}|^2)$, and define, with the operator $D = \sandwich{\overline{m_1}}{B_0}{\psi} A_0 - \sandwich{\overline{m_1}}{A_0}{\psi} B_0$,
\ba
\ket{\overline{m_2}} &=& (\braket{\overline{m_1}}{\overline{m_1}} \one + s \, D)\ket{\psi} - \ket{\overline{m_1}}, \\
\ket{\overline{m_3}} &=& (\braket{\overline{m_1}}{\overline{m_1}} \one - t \, D)\ket{\psi} - \ket{\overline{m_1}}.
\ea

One can then easily check, using in particular the facts that $\braket{\overline{m_1}}{\psi} = 1$, $\sandwich{\overline{m_1}}{D}{\psi} = \sandwich{\psi}{D}{\psi} = 0$ and $\sandwich{\psi}{D^\dagger D}{\psi} = (1 - |\moy{A_0 B_0}|^2) (\braket{\overline{m_1}}{\overline{m_1}} - 1)$, that the three vectors $\ket{\overline{m_1}}, \ket{\overline{m_2}}$ and $\ket{\overline{m_3}}$ are orthogonal. As before, one can normalize these, complete the thus obtained orthonormal basis $\{\ket{m_1}, \ket{m_2}, \ket{m_3}\}$ of ${\mathrm{Span}}\{\ket{\psi}, A\ket{\psi}, B\ket{\psi}\}$ with some additional vectors $\ket{m}$ in ${\cal H}$ (all orthogonal to $\ket{\psi}$, $A\ket{\psi}$ and $B\ket{\psi}$), and define the observable ${\cal M}$ to be a projective measurement onto this basis.

After some similar calculations as in~\eqref{calculation_epsA} (although a bit more tedious\footnote{One can use here in particular the facts that $\braket{\overline{m_2}}{\psi} = \braket{\overline{m_1}}{\overline{m_1}} - 1$, $\braket{\overline{m_2}}{\overline{m_2}} = (\braket{\overline{m_1}}{\overline{m_1}} - 1) [\braket{\overline{m_1}}{\overline{m_1}} + s^2 (1 - |\moy{A_0 B_0}|^2)]$, $\sandwich{\overline{m_2}}{A_0}{\psi} = s \alpha (1 - |\moy{A_0 B_0}|^2) - \sandwich{\overline{m_1}}{A_0}{\psi}$, $\sandwich{\overline{m_2}}{B_0}{\psi} = -s \beta (1 - |\moy{A_0 B_0}|^2) - \sandwich{\overline{m_1}}{B_0}{\psi}$, and similar relations with $\ket{\overline{m_3}}$ (and with $s$ replaced by $-t$); furthermore, $\braket{\overline{m_1}}{\overline{m_1}} - 1 = (1-\chi^2) (q^2 + r^2 + 2 q r \frac{\mathrm{Re} \moy{A_0 B_0}}{\sqrt{1-\chi^2}})$, $(\mathrm{Im} \sandwich{\overline{m_1}}{A_0}{\psi})^2 = (1-\chi^2) (q^2 + r^2 + 2 q r \frac{\mathrm{Re} \moy{A_0 B_0}}{\sqrt{1-\chi^2}}) \sin^2(\varphi{-}\phi_a') - (1 - |\moy{A_0 B_0}|^2) (\mathrm{Im} \alpha)^2$ and $(\mathrm{Im} \sandwich{\overline{m_1}}{B_0}{\psi})^2 = (1-\chi^2) (q^2 + r^2 + 2 q r \frac{\mathrm{Re} \moy{A_0 B_0}}{\sqrt{1-\chi^2}}) \sin^2(\phi_b'{-}\varphi) - (1 - |\moy{A_0 B_0}|^2) (\mathrm{Im} \beta)^2$.}), one finds again, from Eqs.~(\ref{epsA2_optim}--\ref{epsB2_optim}) (i.e. for the optimal approximations of the measurements of $A$ and $B$),
\ba
\epsilon_{\cal A} = \Delta A \left| \sin \!\left( \varphi - \phi_a' \right) \right|, \ \ 
\epsilon_{\cal B} = \Delta B \left| \sin \!\left( \phi_b' - \varphi \right) \right|, \quad \ 
\ea
independently of $q,r,s$ and $t$.
By varying $\varphi \in [-\frac{|\phi'|}{2},\frac{|\phi'|}{2}]$, one obtains again all values of $(\epsilon_{\cal A}, \epsilon_{\cal B})$ saturating the bound~(\ref{ETR}--\ref{ETR_dimensionless}).

\medskip

Let us finally mention the case where $\Delta A \, \Delta B = 0$ (which implies $C_{\!AB} = 0$). In such a case, one can have $\epsilon_{\cal A} = \epsilon_{\cal B} = 0$, saturating again the (then trivial) inequality~(\ref{ETR}): this is indeed obtained for instance by defining ${\cal A} = \moy{A}$ and ${\cal B} = B$ if $\Delta A = 0$, or ${\cal A} = A$ and ${\cal B} = \moy{B}$ if $\Delta B = 0$.

\subsection{Tightness of our error-disturbance relation~\eqref{ETR_same_spectrum} \newline (valid for $A^2 = B^2 = {\cal A}^2 = {\cal B}^2 = \one$ and $\moy{A} = \moy{B} = 0$), \newline when $|\moy{AB}| = 1$}
\label{secSI_proof_tightness2_ETR_same_spectrum}

Let us now turn to the case where ${\cal A}$ and ${\cal B}$ are assumed to have the same spectrum as $A$ and $B$. In such a case one can in general no longer choose the output functions $f(m)$ and $g(m)$ for the approximations of $A$ and $B$ to be the optimal ones, prescribed by Eqs.~(\ref{eqSI_f_opt}--\ref{eqSI_g_opt}). If $A$ and $B$ are dichotomic observables with eigenvalues $\pm 1$, then $f(m)$ and $g(m)$ are also bound to take values $\pm 1$; from Eq.~\eqref{eqSI_epsA2_to_optimize}, one can see that the optimal choice to minimize $\epsilon_{\cal A}$ and $\epsilon_{\cal B}$ is now to choose
\ba
f(m) = \mathrm{sign} \! \left( \! \mathrm{Re} \frac{\sandwich{m}{A}{\psi}}{\braket{m}{\psi}} \! \right) \! , \ \ 
g(m) = \mathrm{sign} \! \left( \! \mathrm{Re} \frac{\sandwich{m}{B}{\psi}}{\braket{m}{\psi}} \! \right) \! . \hspace{-5mm} \nonumber \\ \label{f_g_pm_1}
\ea
Using~\eqref{eqSI_epsA2_to_optimize}, this leads to
\ba
\epsilon_{\cal A}^2 &=& \! \moy{A^2} \! + \! \sum p(m) f(m)^2 \! - \! 2 \! \sum p(m) f(m) \, \mathrm{Re} \frac{\sandwich{m}{A}{\psi}}{\braket{m}{\psi}} \hspace{-2mm} \nonumber \\
&=& 2 - 2 \sum p(m) \Big| \mathrm{Re} \, \frac{\sandwich{m}{A}{\psi}}{\braket{m}{\psi}} \Big| \, , \label{epsA2_optim_pm1}
\ea
and similarly,
\ba
\epsilon_{\cal B}^2 &=& 2 - 2 \sum p(m) \Big| \mathrm{Re} \, \frac{\sandwich{m}{B}{\psi}}{\braket{m}{\psi}} \Big| \, . \label{epsB2_optim_pm1}
\ea

When $|\moy{AB}| = 1$, let us use the eigenvectors defined from~(\ref{def_m1_case1_saturate}--\ref{def_m2_case1_saturate}), with $q=\pm 1$. We now find, from~(\ref{epsA2_optim_pm1}--\ref{epsB2_optim_pm1}),
\ba
\epsilon_{\cal A} = \sqrt{ 2{-}2\big|\!\cos (\varphi{-}\phi_a) \big| }, \ \ 
\epsilon_{\cal B} = \sqrt{ 2{-}2\big|\!\cos (\phi_b{-}\varphi) \big| } \quad \ \
\ea
(note that if $\cos (\varphi{-}\phi_a) \geq 0$ and $\cos (\phi_b{-}\varphi) \geq 0$, one gets $\epsilon_{\cal A} = 2|\sin\frac{\varphi{-}\phi_a}{2}|$ and $\epsilon_{\cal B} = 2|\sin\frac{\phi_b{-}\varphi}{2}|$).
By varying $\varphi$, such that $\cos \phi \sin(\varphi{-}\phi_a) \sin(\phi_b{-}\varphi) \geq 0$, one obtains all minimal values of $(\epsilon_{\cal A}, \epsilon_{\cal B})$ saturating the bound~\eqref{ETR_same_spectrum}.

Furthermore, as can be seen for instance from Figure~\ref{fig_comparison_3_ETRs} (inset), inequality~\eqref{ETR_same_spectrum} also sets upper bounds on the possible values of $(\epsilon_{\cal A}, \epsilon_{\cal B})$. By using the same eigenvectors as above but changing the signs of $f(m)$ and $g(m)$ in Eq.~\eqref{f_g_pm_1}, and possibly mixing such strategies, $(\epsilon_{\cal A}, \epsilon_{\cal B})$ can attain all possible values along the contour of the region restricted by inequality~\eqref{ETR_same_spectrum}.
This shows that our error-disturbance relation~\eqref{ETR_same_spectrum} (valid for $A^2 = B^2 = {\cal A}^2 = {\cal B}^2 = \one$ and $\moy{A} = \moy{B} = 0$) is tight when\footnote{Note that under the conditions $A^2 = B^2 = \one$ and $\moy{A} = \moy{B} = 0$, $|\moy{AB}|=1$ always holds for qubits (as in the case considered in the example of the main text): indeed, both $A\ket{\psi}$ and $B\ket{\psi}$ are orthogonal to $\ket{\psi}$; as ${\cal H}$ is of dimension 2, they are therefore linearly dependent, and $|\moy{AB}|^2 = \moy{A^2} \moy{B^2} = 1$.} $|\moy{AB}| = 1$; its tightness in the case $|\moy{AB}| < 1$ is left as an open problem.

\medskip

One may also wonder if inequality~\eqref{ETR_same_spectrum} remains tight (when $|\moy{AB}| = 1$) in the specific error-disturbance scenario, where one imposes that ${\cal A}$ and ${\cal B}$ have the particular forms ${\cal A} = U^\dagger (\one \otimes M_A) U$ and ${\cal B} = U^\dagger (B \otimes \one) U$ (see main text). The answer is positive: one can indeed transform the previous measurement strategy so that $A$ and $B$ are estimated from measurements on two separate systems. Intuitively, one needs to copy some information on the quantum state $\ket{\psi}$ onto the ancillary system, and rotate the first system so that the measurement in the eigenbasis $\{\ket{m_1}, \ket{m_2}, \ldots\}$ considered before becomes a measurement of $B$ directly. Formally, one can define $U = (U_{\textrm{R}}\otimes\one).U_{\textrm{copy}}$ and $M_A = \sum f(m) \ket{m}\!\bra{m}$, where $U_{\textrm{copy}}$ is a unitary such that $U_{\textrm{copy}} \ket{m,\xi} = \ket{m,m}$ for all basis vectors $\ket{m}$ and $U_{\textrm{R}}$ is a unitary such that $U_{\textrm{R}}^\dagger B U_{\textrm{R}} = \sum g(m) \ket{m}\!\bra{m}$ (where the non-yet-prescribed values $g(m)$ are chosen so that the numbers of $+1$ and $-1$ values are the same as the numbers of $+1$ and $-1$ eigenvalues of $B$, for such a unitary $U_{\textrm{R}}$ to exist). One then obtains the same values for $\epsilon_{\cal A}$ and $\epsilon_{\cal B}$ as before, with the direct measurement of $\ket{\psi}$ in the eigenbasis $\{\ket{m}\}$.

\medskip

Let us finally mention that one can prove in very similar ways that our error-disturbance relation~(\ref{ETR_same_spectrum_B_only}), valid under the assumptions that $B^2 = {\cal B}^2 = \one$ and $\moy{B} = 0$, is also tight when $|\moy{A_0B}| = 1$.

\section{Ozawa's ``uncertainty relation'' follows from our error-trade-off relation}
\label{SecSI_ozawa_from_new_relation}

We show in this last part that Ozawa's ``uncertainty relation'' for joint measurements, Eq.~\eqref{ozawa_relation}~\cite{ozawa04_SI}, follows from our error-trade-off relation~\eqref{ETR}.
Let us indeed write:
\ba
&& \hspace{-5mm} (\epsilon_{\cal A} \, \epsilon_{\cal B} + \Delta B \, \epsilon_{\cal A} + \Delta A \, \epsilon_{\cal B})^2 \nonumber \\[1mm]
&& \hspace{-3mm} \geq (\Delta B \epsilon_{\cal A}{+}\Delta A \epsilon_{\cal B})^2 = \Delta B^2 \epsilon_{\cal A}^2{+}\Delta A^2 \epsilon_{\cal B}^2{+}2 \Delta A \Delta B \, \epsilon_{\cal A} \epsilon_{\cal B} \nonumber \\
&& \hspace{-1mm} \geq \Delta B^2 \epsilon_{\cal A}^2{+}\Delta A^2 \epsilon_{\cal B}^2{+}2 \sqrt{\Delta A^2 \Delta B^2{-}C_{\!AB}^2} \, \epsilon_{\cal A} \epsilon_{\cal B} \, \geq \, C_{\!AB}^2 , \nonumber \\
\label{ozawa_follows}
\ea
where the last inequality is precisely our error-trade-off relation~\eqref{ETR}. After taking the square-root of the above expressions, we obtain Ozawa's relation~\eqref{ozawa_relation}.

Hence, one can see that Ozawa's relation is sub-optimal in two ways. Firstly, one can simply drop the first product term, $\epsilon_{\cal A} \epsilon_{\cal B}$---which, interestingly, is precisely the term in the Heisenberg-Arthurs-Kelly relation~\eqref{heisenberg_relation}. Secondly, one can decrease the factor $\Delta A \, \Delta B$ on the second line of Eq.~\eqref{ozawa_follows} down to $\sqrt{\Delta A^2 \Delta B^2 - C_{\!AB}^2}$. We note also that Ozawa's relation can only be saturated for $\epsilon_{\cal A} = 0$ or $\epsilon_{\cal B} = 0$, as otherwise the first inequality in~\eqref{ozawa_follows} is strict.

Similarly, Ozawa's error-disturbance ``uncertainty relation''~\eqref{ozawa_error_disturbance_relation}~\cite{ozawa03_SI} also follows from our relation~\eqref{ETR}. Recall that in the case where $A^2 = B^2 = {\cal A}^2 = {\cal B}^2 = \one$ and $\moy{A} = \moy{B} = 0$, the strictly stronger relation~\eqref{ETR_same_spectrum} holds; in that case Ozawa's relation~\eqref{ozawa_error_disturbance_relation} cannot be saturated, except in the trivial situation where $C_{\!AB} = 0$ (which allows for $\epsilon_{\cal A} = \eta_{\cal B} = 0$).

\medskip

We note finally that Hall derived in Ref.~\cite{hall04_SI} a very similar ``joint-measurement uncertainty relation'' to Ozawa's relation~\eqref{ozawa_relation}, where the standard deviations $\Delta A, \Delta B$ of $A$ and $B$ in Ozawa's relation are replaced by the standard deviations $\Delta {\cal A}, \Delta {\cal B}$ of ${\cal A}$ and ${\cal B}$. Despite the claim in~\cite{hall04_SI}, Hall and Ozawa's relations are however not equivalent. Hall's relation indeed involves more quantities that depend on the particular choice of ${\cal A}$ and ${\cal B}$ (namely, $\Delta {\cal A}$ and $\Delta {\cal B}$), and not only on $A, B$ and $\ket{\psi}$. As a consequence, Hall's inequality does not simply follow from our error-trade-off relation~\eqref{ETR}---it can even give a stronger constraint on $\epsilon_{\cal A}$ and $\epsilon_{\cal B}$, for fixed values of $\Delta {\cal A}$ and $\Delta {\cal B}$. As it is the case with Ozawa's inequality, we nevertheless also expect a strictly stronger relation than Hall's to hold, when one considers all quantities $\Delta {\cal A}, \Delta {\cal B}, \epsilon_{\cal A}$ and $\epsilon_{\cal B}$ (possibly in addition to $\Delta A$ and $\Delta B$, which are fixed by $A, B$ and $\ket{\psi}$).

\medskip

\noindent \emph{Note added.}
Hall's relation has very recently also been investigated experimentally~\cite{weston12}, together with Ozawa's relation~\eqref{ozawa_relation} and another relation derived by the authors of~\cite{weston12}. Their experiment also demonstrated a violation of the Heisenberg-Arthurs-Kelly relation~\eqref{heisenberg_relation}.


\end{document}